\theoremstyle{definition}
\newtheorem{defn}{Definition}[section]
\theoremstyle{plain}
\newtheorem{thm}{Theorem}
\newtheorem{lem}[defn]{Lemma}
\newtheorem{cor}[defn]{Corollary}
\newtheorem{prop}[defn]{Proposition}
\theoremstyle{remark}
\theoremstyle{dotless}
 \newcommand{\eps}{\varepsilon}
 \newcommand{\M}{\mathcal{M}}
\newcommand{\E}{\mathbb{E}}
\newcommand{\N}{\mathbb{N}}
\newcommand{\Q}{\mathbb{Q}}
\newcommand{\R}{\mathbb{R}}
\newcommand{\m}{\Q}
\newcommand{\law}{\mathrm{Law}}
\newcommand{\1}{\mathbbm 1 }
\newcommand{\no}{\noindent}
\newcommand{\be}{\begin{equation}}
\newcommand{\ee}{\end{equation}}
\newcommand{\bea}{\begin{eqnarray}}
\newcommand{\bes}{\begin{subequations}}
\newcommand{\ees}{\end{subequations}}
\newcommand{\bgt}{\begin{gather}}
\newcommand{\egt}{\begin{gather}}
\newcommand{\eea}{\end{eqnarray}}
\newcommand{\beaa}{\begin{eqnarray*}}
\newcommand{\eeaa}{\end{eqnarray*}}
\newcommand{\cal}{\mathcal}
\renewcommand{\SS}{\mathcal S}
\newcommand{\dm}{\tfrac{d\mu_1}{d\lambda}}
\newcommand{\dn}{\tfrac{d\mu_2}{d\lambda}}
\def\jelname{{\bfseries JEL Classification}\enspace}
\def\jel#1{\par\addvspace\medskipamount{\rightskip=0pt plus1cm
\def\and{\ifhmode\unskip\nobreak\fi\ $\cdot$
}\noindent\jelname\ignorespaces#1\par}}
\def\subclassname{{\bfseries Mathematics Subject Classification
(2010)}\enspace}
\def\subclass#1{\par\addvspace\medskipamount{\rightskip=0pt plus1cm
\def\and{\ifhmode\unskip\nobreak\fi\ $\cdot$
}\noindent\subclassname\ignorespaces#1\par}}
\begin{document}
\title[Model-independent Bounds by Mass Transport]{Model-independent Bounds for Option Prices: \\ A Mass Transport Approach}
\author{Mathias Beiglb\"ock}
\author{Pierre Henry-Labord\`ere}
\author{Friedrich Penkner}
\thanks{ 
\hspace{-0.5cm}M.\ Beiglb\"ock, University of Vienna, Department of Mathematics, {\tt mathias.beiglboeck@univie.ac.at} \\
P.\ Henry-Labord\`ere, Global Markets Quantitative Research, Soci\'{e}t\'{e} G\'{e}n\'{e}rale, {\tt pierre.henry-labordere@sgcib.com}\\
F.\ Penkner, University of Vienna, Department of Mathematics, {\tt friedrich.penkner@univie.ac.at}\\
The first author thanks the FWF for support through project p21209.}


\maketitle

\begin{abstract}\vspace{-0.5cm}
In this paper we investigate  model-independent bounds for exotic options 
written on a risky asset using infinite-dimensional linear
programming methods.
Based on arguments from the theory of Monge-Kantorovich mass-transport
we establish a dual version of the problem that has a natural
financial interpretation in terms of semi-static hedging.
In particular we prove that there is no duality gap.
\medskip

\noindent\textbf{Keywords} model-independent pricing, Monge-Kantorovich optimal transport, robust super-replication theorem
 \subclass{91G20, 91G80}
 \jel{C61, G13}
\end{abstract}

\section{Introduction}\label{sec:intro}
Since the introduction of the Black-Scholes paradigm, several
alternative models which allow to capture the risk of exotic
options have emerged: stochastic volatility
models,  local volatility models, jump-diffusion models, mixed local stochastic volatility
models. These models depend on various parameters which can be
calibrated more or less  accurately to market prices of liquid
options (such as vanilla options). 
This calibration procedure does not  uniquely set the dynamics of
forward prices which are only required  to be (local) martingales
according to the no-arbitrage framework. This could lead to a wide
range of prices of a given exotic option when evaluated using
different models calibrated to the same market data.

 In practice, it would be interesting to know lower and upper
bounds for exotic options produced by models  calibrated to the same
market data, and therefore with similar marginals. If bounds are
tight enough, they would be used to detect the possibility of arbitrage in market
prices, provided these bounds have an interpretation as investment
strategies. This problem has already been studied in the case of
exotic options written on multi-assets $(S^1, \dotsc , S^k)$
observed at the same time $T$ \cite{BePo02,ChDeDhVa08,HoLaWa05a,HoLaWa05b,LaWa05,LaWa04}.  
Within the class of models with
fixed marginals $\left(\law({S^1_T}), \dotsc, \law({S^k_T})\right)$
at $T$, the search for lower/upper bounds involves
infinite-dimensional linear programming issues. Analytical
expressions have been obtained in the case of basket options
\cite{LaWa05,LaWa04}. These correspond to
the determination of optimal copulas.  In practice, these bounds are not tight as the information of marginals is not restrictive enough.

Here we focus on multi-period models and general path-dependent options. 
This problem  is more
involved as we have to impose that the asset price $S_{t}$ is a
discrete time martingale\footnote{For the sake of simplicity, we assume
zero interest rate and no cash/yield dividends. This assumption can
be relaxed by considering the process $f_t$  introduced in
\cite{He09} (see equation 14)  which has the property to be a local
martingale.} satisfying marginal restrictions. We review the existing literature on the subject in Section \ref{sec:intro:subsec:comparison} below.

In our setting the problem of determining the interval of consistent prices of a
given exotic option can be cast as a (primal) infinite-dimensional
linear programming problem. We propose a dual problem that has a
practically relevant interpretation in terms of trading strategies
and prove that there is no duality gap under rather mild regularity
assumptions.
\pagebreak
\subsection{Setting}\label{sec:intro:subsec:setting}

  In the following, we fix an exotic option depending only on
the value of a single asset $S$ at discrete times $t_1 < \ldots <
t_n$
 and denote by $\Phi(S_1,\ldots, S_n)$ its
payoff, where we suppose $\Phi$ to be some measurable function. In the no-arbitrage framework, the standard approach is to
postulate a model, that is, a probability measure $\Q$ on $\R^n$
under which the coordinate process $(S_i)_{i=1}^n$
$$S_i:\R^n \to \R,\ S_i(s_1,\ldots, s_n)=s_i,\ i=1,\ldots,n,$$
 is required to be a (discrete)
martingale in its own filtration. By $S_0=s_0$ we denote the current spot price.  The fair value of $\Phi$ is then given as
the expectation of the payoff
$$\E_\Q [\Phi]. $$
 Additionally, we impose that our
model is calibrated to a continuum of call options with payoffs
$\Phi_{i,K}(S_i)=(S_i-K)^+, K\in \R$ at each date $t_i$ and price
    \begin{align} 
	{\cal C}(t_i,K)= \E_\Q[ \Phi_{i,K}] = \int_{\R^+}
	(s-K)^+\, d\law_{S_i}(s). \label{vanilla} 
    \end{align}
 Plainly
\eqref{vanilla} is tantamount to prescribing probability measures
$\mu_1,  \ldots, \mu_n$ on the real line\footnote{The cumulative
distribution function of $\mu_i$ can be read off the call prices
through $F_i(K)= 1 - \lim_{\eps\downarrow 0}  1/\eps \big[ {\cal
C}(t_i,K)-{\cal C}(t_i,K+\varepsilon) \big]$ for $i=1,\ldots,n$.
Concerning the mathematical finance application it would be sufficient to consider strikes $K\geq 0$ and 
marginals which are concentrated on the positive half-line. 
We prefer to go with the more general case since the proofs are not more complicated. A technical difference is that call prices satisfy only $\lim_{K\to -\infty}{\cal C}(t_i,K) - K = s_0$ rather than the simpler ${\cal C}(t_i,0)  = s_0$ in the case where $S$ is assumed to be non-negative.} such that the \emph{one dimensional marginals} of $\Q$ satisfy
    $$
	\Q^i=\law_{S_i}=\mu_i\,  \; \mbox{for all}  \; i=1,\ldots, n.
    $$

\subsection{Primal formulation}\label{sec:intro:subsec:primal}

For further reference, we denote by $\M(\mu_1, \ldots, \mu_n)$ the
set of all martingale measures $\Q$ on  (the pathspace) $\R^n$ having marginals
$\Q^1=\mu_1, \ldots, \Q^n=\mu_n$ and mean $s_0$. Equivalently, we have
$\Q\in\M(\mu_1, \ldots, \mu_n)$ if and only if
$\E_\Q[S_i|S_1,\ldots,S_{i-1}]=S_{i-1}$ for $i=2, \ldots, n$ and
$\E_\Q[ \Phi_{i,K}]=   {\cal C}(t_i,K)$ for all $K\in \R$ and $i=1,
\ldots, n$.

  Following the tradition customary in the optimal transport literature we concentrate on the lower
  bound and consider the \emph{primal problem} \begin{align}
P=\inf\big\{ \E_\Q[ \Phi]: \Q \in \M(\mu_1, \ldots, \mu_n)\big\}.
\label{primal}
\end{align}

\subsection{Dual formulation}\label{sec:intro:subsec:dual}
The dual formulation corresponds to the construction of a
\emph{semi-static subhedging strategy} 
  consisting of the sum of a
static vanilla portfolio and a delta strategy.\footnote{Similar strategies are considered in \cite{DaHo07,Co07} where they are used to subreplicate a European option based on finitely many given call options.} 
 More precisely, we
are interested in payoffs of the form
    \begin{align}
	\Psi_{(u_i), (\Delta_j)}(s_1, \ldots, s_n)= \sum_{i=1}^n u_i(s_i)+
	\sum_{j=1}^{n-1} \Delta_j(s_1, \ldots, s_{j})(s_{j+1}-s_{j} ),\,
	\quad s_1, \ldots, s_n\in \R, \label{DualCandidate}
    \end{align}
where the functions $u_i:\R\to \R$ are $\mu_i$-integrable ($i=1,
\ldots, n$)  and the functions $\Delta_j:\R^j\to \R$ are assumed to
be bounded measurable ($j=1,\ldots, (n-1)$).\footnote{It might be expected that the delta strategy in \eqref{DualCandidate} should also include a constant $\Delta_0$ multiplier of $(s_1-s_0)$ corresponding to an initial forward position. However this term is not necessary as it can be subsumed into the term $u_1$.}

If these functions lead to a strategy which is subhedging in the
sense
$$\Phi\geq \Psi_{(u_i), (\Delta_j)}$$
we have for every pricing measure $\Q\in \M(\mu_1, \ldots, \mu_n)$
the obvious inequality
    \begin{align}\label{PriceRep}
	\E_\Q [\Phi]\geq \E_\Q[\Psi_{(u_i), (\Delta_j)}]=
	\E_\Q\Big[ \sum_{i=1}^n u_i(S_i)\Big]  =
	\sum_{i=1}^n \E_{\mu_i} [u_i].
    \end{align}
This leads us to consider the  \emph{dual problem}
    \begin{align}
	D= & \sup\Big\{ \sum_{i=1}^n \E_{\mu_i} [u_i] : \exists\, \Delta_1,
	\ldots, \Delta_{n-1} \mbox{ s.t. }\Psi_{(u_i), (\Delta_j)}\leq
	\Phi\Big\};\label{Dual}
    \end{align}
which, by \eqref{PriceRep}, satisfies
    \begin{align}\label{TrivialDuality}
	P\geq D. 
    \end{align}

\subsection{Semi-static subhedging}\label{sec:intro:subsec:semi_static}
The dual formulation corresponds to the construction of a semi-static subhedging
portfolio consisting of static vanilla options  $ u_i(S_i)$ and investments in the risky asset according to
the self-financing trading strategy $\big(\Delta_j(S_1, \ldots, S_{j})\big)_{j=1}^{n-1}.$

We note the financial interpretation of inequality
\eqref{TrivialDuality}: suppose somebody offers the option $\Phi$ at
a price $p< D$. Then there exists ${(u_i), (\Delta_j)}$ with
$\Psi_{(u_i), (\Delta_j)} \leq \Phi$ with price $\sum_{i=1}^n
\E_{\mu_i}[u_i]$ strictly larger than $p$. Buying $\Phi $ and going
short in $\Psi_{(u_i), (\Delta_j)}$, the arbitrage can be locked in.

The crucial question is of course if \eqref{TrivialDuality} is
sharp, i.e.\ if every option priced below $P$ allows for an
arbitrage by means of semi-static subhedging. In Theorem \ref{MainTheorem} below we show that this is
the case under relatively mild assumptions.

\medskip

Of course it is a classical  theme of Mathematical Finance that the extremal martingale prices of a financial derivative correspond to the minimal or maximal initial capital necessary for sub-/super-replication, respectively. This is precisely the replication theorem of mathematical finance, which is a corollary of the fundamental theorem of asset pricing. The novelty of our contribution is that we establish a robust, \emph{model-free} version of this result.

\subsection{Main result}\label{sec:intro:subsec:main_result}

\begin{thm}\label{MainTheorem}
Assume that $\mu_1,\ldots, \mu_n$ are Borel probability measures on
$\R$ such that $\M(\mu_1,\ldots, \mu_n) $ is non-empty. Let
$\Phi:\R^n\to (-\infty, \infty]$ be a lower semi-continuous function such that
    \begin{align}\label{PayoffLB1}
	\Phi(s_1, \ldots , s_n)\geq - K\cdot  (1+|s_1|+\ldots+|s_n|)
    \end{align}
on $\R^n$ for some constant $K$. Then there is no duality gap, i.e.\
$P=D$. Moreover, the primal value 
 $P$ is attained, i.e.\ there exists a martingale measure $\m\in\M(\mu_1, \ldots, \mu_n)$ such that $P=\E_\Q[\Phi]$.  
 
The dual supremum is in general not attained (cf.~Proposition~\ref{DualEx} below).
\end{thm}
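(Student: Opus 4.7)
The plan is to prove primal attainment by a compactness argument and to establish $P=D$ via a minimax swap combined with the classical Monge--Kantorovich duality, applied to a one-parameter family of transport problems indexed by the trading strategies $\Delta=(\Delta_j)$.

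\textbf{Primal attainment.} Any $\Q\in\M(\mu_1,\ldots,\mu_n)$ satisfies $\int s\,d\mu_i(s)=\E_\Q[S_i]=s_0$, so each $\mu_i$ has finite first moment. Since the marginals are fixed, $\M(\mu_1,\ldots,\mu_n)$ is tight and hence weakly relatively compact by Prokhorov. Weak closedness follows because the marginals are preserved under weak convergence and the martingale identity $\E_\Q[g(S_1,\ldots,S_j)(S_{j+1}-S_j)]=0$ for bounded continuous $g$ passes to the limit by uniform integrability of $S_j$ under the fixed marginal $\mu_j$. Writing $L(s_1,\ldots,s_n)=-K(1+|s_1|+\cdots+|s_n|)$, the difference $\Phi-L$ is nonnegative and lower semi-continuous, so $\Q\mapsto\E_\Q[\Phi-L]$ is weakly l.s.c.\ by Portmanteau; since $\E_\Q[L]$ is a constant on $\M(\mu_1,\ldots,\mu_n)$, $\Q\mapsto\E_\Q[\Phi]$ is l.s.c.\ on the weakly compact set $\M$, and $P$ is attained.

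\textbf{Duality via minimax.} To prove $P\le D$, let $\Pi(\mu_1,\ldots,\mu_n)$ denote the set of couplings with marginals $\mu_i$ (martingale property dropped). For $\Q\in\Pi$ and bounded measurable $\Delta=(\Delta_j)$, the quantity $\E_\Q\big[\Phi-\sum_j\Delta_j(S_1,\ldots,S_j)(S_{j+1}-S_j)\big]$ equals $\E_\Q[\Phi]$ when $\Q\in\M$, while its supremum over $\Delta$ is $+\infty$ on $\Pi\setminus\M$; hence
\begin{align*}
P=\inf_{\Q\in\Pi}\sup_{\Delta}\E_\Q\Big[\Phi-\textstyle\sum_j\Delta_j(S_1,\ldots,S_j)(S_{j+1}-S_j)\Big].
\end{align*}
Swapping infimum and supremum and then applying the classical Monge--Kantorovich duality to each fixed $\Delta$ (with cost $\Phi-\sum_j\Delta_j(\cdots)(s_{j+1}-s_j)$ and marginals $\mu_i$) gives
\begin{align*}
\sup_\Delta\inf_{\Q\in\Pi}\E_\Q\Big[\Phi-\textstyle\sum_j\Delta_j(\cdots)(S_{j+1}-S_j)\Big]=\sup_{(u_i),(\Delta_j)}\Big\{\textstyle\sum_i\int u_i\,d\mu_i:\Psi_{(u_i),(\Delta_j)}\le\Phi\Big\}=D,
\end{align*}
so that the argument collapses to justifying the minimax swap.

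\textbf{Making the swap rigorous.} First restrict to bounded continuous $\Phi$ and to bounded continuous $\Delta_j$. The set $\Pi(\mu_1,\ldots,\mu_n)$ is convex and weakly compact, the Lagrangian is affine in $\Q$ and in $\Delta$, and $\Q\mapsto\E_\Q[\Delta_j(\cdots)(S_{j+1}-S_j)]$ is weakly continuous on $\Pi$ thanks to uniform integrability of $|S_j|$ under the fixed marginals; Sion's minimax theorem then applies. The classical Kantorovich duality for the resulting transport cost (which possesses a $\mu$-integrable lower bound) yields the inner equality. For general lower semi-continuous $\Phi$ satisfying \eqref{PayoffLB1}, approximate $\Phi$ from below by bounded continuous $\Phi_m\uparrow\Phi$ with $\Phi_m\ge L$; monotone convergence (using the integrable minorant $L$) gives $P_m\uparrow P$, and since any subhedge of $\Phi_m$ is also a subhedge of $\Phi$ one has $D_m\le D\le P$, so $P_m=D_m$ from the bounded continuous case forces $P=D$.

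\textbf{Main obstacle.} The most delicate point is the minimax swap: the Lagrangian contains the unbounded factor $s_{j+1}-s_j$, and one eventually needs the supremum over \emph{all} bounded measurable $\Delta_j$, not just continuous ones. This can be handled by approximating arbitrary bounded measurable $\Delta_j$ by continuous compactly supported functions, exploiting uniform integrability of $|S_j|$ on $\Pi(\mu_1,\ldots,\mu_n)$; alternatively one may run a Fenchel--Rockafellar/Hahn--Banach argument directly in a space of functions of linear growth. The non-attainment of the dual supremum is a separate phenomenon, unrelated to the duality gap, and is addressed in Proposition~\ref{DualEx}.
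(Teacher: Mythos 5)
Your proposal follows essentially the same route as the paper's proof: weak compactness of $\M(\mu_1,\ldots,\mu_n)$ together with weak lower semicontinuity of $\pi\mapsto\int\Phi\,d\pi$ for primal attainment, and, for $P=D$, the Lagrangian $\inf_{\Q\in\Pi}\sup_{\Delta}$, a minimax swap on the compact convex set $\Pi(\mu_1,\ldots,\mu_n)$ (you invoke Sion where the paper uses a decision-theoretic min--max theorem; they are interchangeable here), classical Kantorovich duality applied to the cost $\Phi-\sum_j\Delta_j(\cdots)(s_{j+1}-s_j)$ for each fixed $\Delta$, and monotone approximation from below to pass from bounded continuous to lower semicontinuous $\Phi$. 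The only substantive difference is that the paper proves a refined Kantorovich duality (its Proposition \ref{MKDuality}, with $u_i$ restricted to piecewise-linear combinations of calls) to get the financial statement about call portfolios; your classical version suffices for the bare equality $P=D$.

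Two corrections. First, the step ``monotone convergence gives $P_m\uparrow P$'' is not justified as written: monotone convergence yields $\E_\Q[\Phi_m]\uparrow\E_\Q[\Phi]$ for each \emph{fixed} $\Q$, but the infimum over $\Q$ of an increasing sequence of functionals need not converge to the infimum of the limit on a non-compact set. The correct argument — which the paper spells out — uses the compactness you already have: choose $\Q_m$ with $\int\Phi_m\,d\Q_m\leq P_m+1/m$, extract a weak limit $\Q$, and estimate $P\leq\int\Phi\,d\Q=\lim_k\int\Phi_k\,d\Q=\lim_k\lim_m\int\Phi_k\,d\Q_m\leq\lim_m\int\Phi_m\,d\Q_m=\lim_m P_m$. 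Second, your ``main obstacle'' is not an obstacle: you never need the supremum over all bounded measurable $\Delta_j$. By the paper's Lemma \ref{DifferentChar}, the supremum over \emph{continuous bounded} $\Delta_j$ already equals $+\infty$ off $\M(\mu_1,\ldots,\mu_n)$, so the identity $P=\inf_\Q\sup_{\Delta\in C_b}$ holds with the smaller class; and since enlarging the class of admissible $\Delta_j$ in the dual can only increase $D$, the chain $P=\sup_{\Delta\in C_b}\sup_{u_i}\{\cdots\}\leq D\leq P$ closes without any approximation of measurable strategies by continuous ones.
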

Our approach to this result is based on the duality theory of
optimal transport which is briefly introduced in Section 2; the
actual proof  will be given in Section 3 with the help of the
Min-Max Theorem of decision theory.
We conclude this introductory section by a short discussion of the
content of Theorem \ref{MainTheorem}.

 \medskip

The assumption $\M(\mu_1,\ldots, \mu_n)\neq \emptyset $ excludes the
degenerate case in which no calibrated market model exists. For the
existence of a martingale measure having marginals $\mu_1, \ldots,
\mu_n$ it is necessary and sufficient that these measures possess
the same finite first moments and increase in the \emph{convex
order}, i.e.\ $\E_{\mu_1}\phi\leq\ldots\leq \E_{\mu_n}\phi$ for each
convex function $\phi:\R\to\R$ (cf.\ \cite{St65}).\footnote{In more
financial terms this means that ${\cal C}(t,K)$ is increasing in $t$
for each fixed $K\in \R$.}

\medskip

Having the financial interpretation in mind, it is important that the value $D$ of the dual problem remains unchanged if a smaller set of subhedging strategies $\Psi_{(u_i),(\Delta_j)}$ is used. 
In the proof of Theorem \ref{MainTheorem} we show that it is sufficient to  consider functions $u_1, \ldots, u_n$  which
are linear combinations of finitely many call options  (plus one
position in the bond resp.\ the stock); at the same time  $\Delta_1,
\ldots, \Delta_{n-1}$ can be taken to be continuous and bounded. 
This means that for every $\eps>0$ there exist $b, c_{i,l}, K_{i,l}\in\R,\, i=1, \ldots,n,\, l=1, \ldots,m_i,\, \Delta_j\in \mathcal {C}_b(\R^j)$, $j=0, \ldots, n-1$ such that
\begin{align}
b +\sum_{i=1}^n \sum_{l=1}^{m_i}c_{i,l} (s_i-K_{i,l})_+ +\sum_{j=0}^{n-1} \Delta_{j}(s_1, \ldots, s_{j})(s_{j+1}-s_{j})\leq\Phi(s_1,\ldots, s_n),
\end{align}
and the corresponding price 
\begin{align}
p=b +\sum_{i=1}^n \sum_{l=1}^{m_i}c_{i,l} {\cal C}(t_i,K_{i,l}) 
\end{align}
is $\eps$-close to the primal value $P$.

\medskip

Condition \eqref{PayoffLB1} could be somewhat relaxed. For instance
it is sufficient to demand that the function $\Phi$ is bounded from
below by a sum of integrable functions. However, in this case it
is necessary to allow for dual strategies that use European options
beyond call options and we will not pursue this further.

\medskip

We conclude this introductory section by noting that an \emph{upper} bound for the price of the option $\Phi$ can be given by means of \emph{semi-static superhedging}. Applying Theorem \ref{MainTheorem} to the function $-\Phi$ we obtain that this bound is sharp: 
\begin{cor}\label{MainCorollary}
Assume that $\mu_1,\ldots, \mu_n$ are Borel probability measures on
$\R$ such that $\M(\mu_1,\ldots, \mu_n) $ is non-empty. Let
$\Phi:\R^n\to [-\infty, \infty)$ be an upper semi-continuous function such that
\begin{align}\label{PayoffLB2}
\Phi(s_1, \ldots , s_n)\leq K\cdot  (1+|s_1|+\ldots+|s_n|)
\end{align}
on $\R^n$ for some constant $K$. Then there is no duality gap
\begin{align}
P =& \sup\Big\{\E_\m\Phi: \m\in\M(\mu_1, \ldots, \mu_n)\Big\} \\
  =& \inf\Big\{ \sum_{i=1}^n \E_{\mu_i} [u_i] : \exists\, \Delta_1,
\ldots, \Delta_{n-1} \mbox{ s.t. }\Psi_{(u_i), (\Delta_j)}\geq
\Phi\Big\} = D. 
\end{align}

The supremum is attained, i.e.\ there exists a maximizing martingale measure. 

\end{cor}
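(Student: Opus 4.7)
The plan is to derive Corollary \ref{MainCorollary} directly from Theorem \ref{MainTheorem} by applying that theorem to $\tilde\Phi := -\Phi$. Since $\Phi$ is upper semi-continuous with $\Phi(s_1,\ldots,s_n)\leq K(1+|s_1|+\ldots+|s_n|)$, the function $\tilde\Phi$ is lower semi-continuous and satisfies the linear lower bound \eqref{PayoffLB1}. Thus the hypotheses of Theorem \ref{MainTheorem} are met for $\tilde\Phi$.

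Next I would translate the primal and dual quantities. Writing $\tilde P$ and $\tilde D$ for the primal and dual values associated with $\tilde\Phi$ in the sense of \eqref{primal} and \eqref{Dual}, linearity of expectation gives
\[
\tilde P = \inf_{\Q \in \M(\mu_1,\ldots,\mu_n)} \E_\Q[-\Phi] = -\sup_{\Q \in \M(\mu_1,\ldots,\mu_n)} \E_\Q[\Phi] = -P,
\]
where $P$ denotes the supremum appearing in the corollary. For the dual, observe that the map $(u_i,\Delta_j)\mapsto \Psi_{(u_i),(\Delta_j)}$ is linear, so $\Psi_{(-u_i),(-\Delta_j)}=-\Psi_{(u_i),(\Delta_j)}$. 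Consequently the substitution $(u_i,\Delta_j)\leftrightarrow(-u_i,-\Delta_j)$ is a bijection between admissible subhedging strategies for $\tilde\Phi$ and admissible superhedging strategies for $\Phi$, and it flips the sign of the vanilla cost $\sum_i \E_{\mu_i}[u_i]$. This yields
\[
\tilde D = \sup\Bigl\{\sum_{i=1}^n \E_{\mu_i}[u_i] : \Psi_{(u_i),(\Delta_j)}\leq\tilde\Phi\Bigr\} = -\inf\Bigl\{\sum_{i=1}^n \E_{\mu_i}[u_i] : \Psi_{(u_i),(\Delta_j)}\geq\Phi\Bigr\} = -D.
\]

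Applying Theorem \ref{MainTheorem} to $\tilde\Phi$ gives $\tilde P=\tilde D$, which is equivalent to $P=D$. Attainment of the supremum in the corollary follows in the same way: Theorem \ref{MainTheorem} furnishes some $\Q^\ast\in\M(\mu_1,\ldots,\mu_n)$ with $\E_{\Q^\ast}[\tilde\Phi]=\tilde P$, and unwinding signs yields $\E_{\Q^\ast}[\Phi]=P$, so $\Q^\ast$ is the desired maximizer.

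There is no substantive obstacle to this argument; the only item worth checking is that the sign-flip $(u_i,\Delta_j)\leftrightarrow(-u_i,-\Delta_j)$ preserves the admissibility conditions, namely $\mu_i$-integrability of $u_i$ and bounded measurability of $\Delta_j$. Both classes are obviously closed under negation, so the bijection is legitimate.
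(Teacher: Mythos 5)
Your proposal is correct and is exactly the argument the paper intends: the text introduces the corollary with the remark "Applying Theorem \ref{MainTheorem} to the function $-\Phi$ we obtain that this bound is sharp," and your write-up simply carries out the sign-flip bookkeeping for the primal value, the dual value, and the attaining measure. Nothing further is needed.
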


\medskip

\subsection{Comparison with previous results}\label{sec:intro:subsec:comparison}

The main novelty of our approach is that we apply the theory of optimal transport in mathematical finance, more specifically, to obtain robust model-independent bounds on option prices. 
A time-continuous analysis of the present connection between optimal transport and mathematical finance is contained in the parallel work to the present one by Galichon, Henry-Labord\`ere, Touzi \cite {GaHeTo11} (see also  \cite{HeObSpTo12}) where a stochastic control approach is used.  

We point out that the problem of model independent pricing is classically approached in the literature by means of the Skorokhod embedding problem, see the informative survey paper by Hobson \cite{Ho11}.   Also the notion of semi-static hedges is well-established (see for instance \cite[Section 2.6]{Ho11}). 

The problem of robust pricing in a multi-period setting has previously been studied in the case of specific exotic options.  Hodges and Neuberger \cite{HoNe00} are mainly interested in the case of Barrier options. Albrecher,  Mayer and Schoutens  produce an explicit bound (based on conditioning arguments) in the case of an Asian option in  discrete time  and give a feasible subreplicating strategy associated to it \cite{AlMaSc08}. The problem to explicitly give the optimal lower/upper bounds  seems harder and remains open to the best of our knowledge. A numerical implementation of our dual approach in the Asian option setting is given in \cite{He11a}. 

In the continuous-time setting, the problem has been treated for instance in the case of lookback options \cite{Ho98a}, variance/volatility options \cite{CaLe10,CoWa11,HoKl12} and double-(no) touch options \cite{CoOb11a,CoOb11b}. These solutions are mainly based on Skorokhod-stopping techniques and differ from our approach also in that only the marginal at the maturity is incorporated. Extensions to the multi-marginal case are addressed in \cite{BrHoRo01b,MaYo02,HoPe02,HeObSpTo12}. 

A result similar to our findings  was recently proved for forward-start options by Hobson and Neuberger \cite{HoNe12}. In the terminology of Corollary \ref{MainCorollary} they show that $P=D$ in the case where $n=2$ and the payoff function is given by $$ \Phi(s_1,s_2)= |s_2-s_1|.$$
In contrast to our paper, their approach is more constructive and they obtain maximizers for the dual problem in particular cases. Here some care is needed in certain (pathological) situations, see Proposition \ref{DualEx} below.

\section{Optimal Transport}\label{sec:opt_trans}

In the usual theory of Monge-Kantorovich optimal transport\footnote{See
\cite{Vi03,Vi09} for an extensive account on the theory of optimal
transportation.} one
considers two probability spaces $(X_1,\mu_1)$, $(X_2,\mu_2)$ and
the problem is to find a ``cheap'' way of transporting $\mu_1$ to
$\mu_2$. Following Kantorovich, a transport plan is formalized as
probability measure $\pi$ on $X_1\times X_2$  which has
$X_1$-marginal $\mu_1$ and $X_2$-marginal $\mu_2$. 

We will come back to the two dimensional case in Section 4 below;
for now we turn to the \emph{multidimensional version} of the transport problem which will be the main tool in our proof of Theorem \ref{MainTheorem}. Subsequently we consider probability
measures $\mu_1, \ldots, \mu_n$ on the real line\footnote{Most of
the basic results are equally true for polish probability spaces
$(X_1,\mu_1),\ldots, (X_n, \mu_n)$, but we do not need this
generality here.} which have finite first moments. The set
$\Pi(\mu_1, \ldots, \mu_n)$ of \emph{transport plans} consists of
all Borel probability measures on $\R^n$  with marginals $\mu_1,
\ldots, \mu_n$. A \emph{cost function} is a measurable function
$\Phi:\R^n\to (-\infty, \infty]$ which is bounded from below in the
sense that there exist $\mu_i$-integrable functions $u_i$,
$i=1,\ldots, n$  such that
    \begin{align}\label{BoundedBelow}
	\Phi\geq u_1\oplus\ldots\oplus u_n,
    \end{align}
where $ u_1\oplus\ldots\oplus u_n(x_1,\ldots, x_n):=u_1(x_1)+
\ldots+u_n(x_n)$. Given a cost function $\Phi $ and a transport plan
$\pi$ the \emph{cost functional} is defined as
    \begin{align}\label{KantorovichPrimal}
	\textstyle{ I_\pi(\Phi)= \int_{\R^n} \Phi\, d\pi}\, .
    \end{align}
Note that this integral is well defined (assuming possibly the value
$+\infty$) by \eqref{BoundedBelow}. The \emph{primal
Monge-Kantorovich problem} is then to minimize $I_{\pi}(\Phi)$ over
the set of all transport plans $\pi\in\Pi(\mu_1, \ldots, \mu_n)$.

\medskip

Given $\mu_i$-integrable functions $u_i$, $i=1,\ldots, n$, such that
    \begin{align}\label{DualAdmissible}
	\Phi\geq u_1\oplus\ldots\oplus u_n,
    \end{align}
we have for every transport plan $\pi$
    \begin{align}\label{TMKD} \textstyle{
    \int \Phi\,d\pi\geq \int u_1\oplus\ldots\oplus u_n\, d\pi= \int
    u_1\,d\mu_1+\ldots+\int u_n\,d\mu_n.}
    \end{align}
The \emph{dual} part of the Monge-Kantorovich problem is to maximize
the right hand side of \eqref{TMKD} over a suitable class of functions
satisfying \eqref{DualAdmissible}.

Starting already with Kantorovich, there has been a long line of research on the question in which setting the optimal values of primal and dual problem agree, we refer the reader to \cite[p.\ 88f.]{Vi09} for an account of the history of the problem.
 For our intended application, we need to restrict the dual
maximizers to functions  in
    $$
    \SS= \Big\{u:\R\to \R: u(x)= a +b x + \sum_{i=1}^m c_i(x-k_i)_{+},\,  a,b,c_i,k_i\in\R\Big\},
    $$
i.e., we will employ the following Monge-Kantorovich duality
theorem.
\begin{prop}\label{MKDuality} Let $\Phi: \R^n\to (-\infty, \infty]$ be a lower semi-continuous function satisfying
    \begin{align}\label{PayoffLB}
	\Phi(s_1, \ldots , s_n)\geq - K\cdot  (1+|s_1|+\ldots+|s_n|)
    \end{align}
on $\R^n$ for some constant $K$ and let $\mu_1, \ldots, \mu_n$ be 
probability measures on $\R$ having finite first moments. Then
    \begin{align*}
    P_{MK}(\Phi)=& \inf\{I_{\pi}(\Phi): \pi\in \Pi(\mu_1, \ldots, \mu_n)\} \\
		=& \sup\Big\{ \sum_{i=1}^n \int u_i\, d\mu_i:u_1\oplus\ldots\oplus
    u_n\leq \Phi, \, u_i\in \SS\Big\}=D_{MK}(\Phi)\,.
    \end{align*}
\end{prop}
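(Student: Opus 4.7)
The inequality $P_{MK}(\Phi)\geq D_{MK}(\Phi)$ is immediate from~\eqref{TMKD}, since for any $u_i\in\SS$ with $u_1\oplus\cdots\oplus u_n\leq \Phi$ and any $\pi\in\Pi(\mu_1,\ldots,\mu_n)$ one has $\int\Phi\,d\pi\geq\sum_i\int u_i\,d\mu_i$. The task is the reverse.

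My plan is a two-stage reduction. First, I would approximate $\Phi$ from below by a monotone sequence $\Phi_k\uparrow\Phi$ of bounded continuous functions (using lower semicontinuity to write $\Phi$ as a supremum of continuous functions, truncating from above, and then taking running maxima), each still majorising the linear lower bound in~\eqref{PayoffLB} with the same constant $K$. Weak tightness of $\Pi(\mu_1,\ldots,\mu_n)$ (via Prokhorov, applied to the finite-first-moment marginals) together with a standard diagonal weak-limit argument yields $P_{MK}(\Phi_k)\uparrow P_{MK}(\Phi)$; since $D_{MK}(\Phi_k)\leq D_{MK}(\Phi)$ is trivial, this reduces matters to the case of bounded continuous $\Phi$.

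For bounded continuous $\Phi$, I would execute a direct min-max argument. Define the Lagrangian
\begin{equation*}
L(\pi,u)=\int\Phi\,d\pi+\sum_{i=1}^n\int u_i\,d\mu_i-\sum_{i=1}^n\int u_i(s_i)\,d\pi(s_1,\ldots,s_n)
\end{equation*}
on pairs $(\pi,u)$, with $\pi$ a Borel probability on $\R^n$ of bounded first moment and $u=(u_1,\ldots,u_n)\in\SS^n$. Two routine manipulations then identify the extremal values. On one hand, $\inf_\pi\sup_u L(\pi,u)=P_{MK}(\Phi)$: because $\SS$ contains every call payoff $(s-k)_+$ and the affine functions, it separates probabilities of finite first moment (by the characterisation in the footnote on page~3), so any wrong marginal of $\pi$ is detected by some $u_i\in\SS$ and can be scaled to force $\sup_u L=+\infty$; for $\pi\in\Pi(\mu_1,\ldots,\mu_n)$ the $u$-terms cancel and $L(\pi,u)=\int\Phi\,d\pi$. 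On the other hand, $\sup_u\inf_\pi L(\pi,u)=D_{MK}(\Phi)$, since the $\pi$-infimum is $-\infty$ unless $u_1\oplus\cdots\oplus u_n\leq\Phi$ pointwise, in which case it equals $\sum_i\int u_i\,d\mu_i$. Exchanging $\inf$ and $\sup$ via a minimax theorem then gives the claimed equality.

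The hard part will be the minimax interchange. Sion's theorem requires compactness on one side; $L$ is affine (hence concave-convex) in each variable separately, and I would take $\pi$ over a weakly compact set of probabilities with a fixed uniform first-moment bound, chosen large enough not to exclude any optimal transport plan. The subtlety is that the identity $\inf_\pi\sup_u L=P_{MK}(\Phi)$ relies on unbounded scaling in~$u$, which is incompatible with a single application of Sion on a compact $u$-domain. I plan to circumvent this by applying Sion on nested convex bounded subsets $\SS^n_R\subset\SS^n$ and showing that $\sup_{u\in\SS^n_R}\inf_\pi L(\pi,u)$ and $\inf_\pi\sup_{u\in\SS^n_R}L(\pi,u)$ stabilise, as $R\to\infty$, to $D_{MK}(\Phi)$ and a value $\leq P_{MK}(\Phi)$ respectively. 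Weak-$*$ continuity of $L$ in $\pi$ on the first-moment-bounded class uses boundedness of $\Phi$ and linear growth of elements of $\SS$ — this is also why the uniform first-moment restriction on the $\pi$-domain is indispensable.
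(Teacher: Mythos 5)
Your first stage (monotone approximation $\Phi_k\uparrow\Phi$ by bounded continuous functions, compactness of $\Pi(\mu_1,\ldots,\mu_n)$, weak-limit extraction, $P_{MK}(\Phi_k)\uparrow P_{MK}(\Phi)$) coincides with the paper's own reduction. The second stage is where you diverge: the paper does not reprove Kantorovich duality on $\R^n$ by minimax, but quotes Kellerer's theorem (duality with $\mu_i$-integrable dual functions for l.s.c.\ $\Phi\geq0$) and then spends its effort on shrinking the dual class -- first from integrable functions to $C_b(\R)$ via the partial conjugation $\tilde u_1(x_1)=\inf_{x_2,\ldots,x_n}\bigl(\Phi-(u_2\oplus\cdots\oplus u_n)\bigr)$, then from $C_b(\R)$ to $\SS$ by approximating bounded continuous functions from below in $L^1(\mu_i)$ by call combinations. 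You instead dualize the marginal constraints directly through a Lagrangian and Sion's theorem.

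That minimax step has a genuine gap. On your proposed compact domain (all Borel probabilities on $\R^n$ with first moment at most $C$), the map $\pi\mapsto L(\pi,u)$ is in general neither weakly continuous nor lower semicontinuous once some $u_i\in\SS$ is unbounded above: take $u=(u_1,0,\ldots,0)$ with $u_1(x)=x$ and $\pi_k=(1-\tfrac1k)\delta_0+\tfrac1k\delta_{(k,0,\ldots,0)}$. Then $\pi_k\to\delta_0$ weakly with first moments identically $1\leq C$, yet $\int u_1(s_1)\,d\pi_k\equiv 1\not\to0=\int u_1(s_1)\,d\delta_0$, so $\lim_k L(\pi_k,u)=L(\delta_0,u)-1<L(\delta_0,u)$ and the l.s.c.\ hypothesis of Sion's theorem on the compact side fails. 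A uniform first-moment bound gives tightness but \emph{not} uniform integrability of the first moments; mass can escape to infinity carrying first moment with it. This is precisely why the paper's Lemma \ref{WeakExtended} is stated only on $\Pi(\mu_1,\ldots,\mu_n)$, where the fixed marginals dominate the tails. Moreover, your $R\to\infty$ stabilization claim -- that $\inf_\pi\sup_{u\in\SS^n_R}L$ tends to a value $\leq P_{MK}(\Phi)$ -- is asserted rather than proved: it requires showing that near-optimizers $\pi_R$, whose marginal mismatch is penalized only at rate $R$, converge to an element of $\Pi(\mu_1,\ldots,\mu_n)$ without cost leaking away in the limit, which is the same loss-of-mass issue in another guise. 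As written the argument does not close; to repair it you would either have to work with bounded test functions (which are weakly continuous but cannot be rescaled inside a bounded set $\SS^n_R$, so the identification of $\inf_\pi\sup_u L$ with $P_{MK}$ must be reargued), or follow the paper in outsourcing the unconstrained duality to Kellerer's theorem and concentrating on the reduction of the dual class to $\SS$.
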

The dual bound $D_{MK}$ could be realized by holding a static position in European options with respective maturity date $t_i$ and payoff $u_i$. 
This static portfolio 
with intrinsic value $\sum_{i=1}^n u_i$ and market value $ \sum_{i=1}^n \E_{\mu_i}[ u_i]$ subreplicates the payoff $\Phi$ at maturity.

\medskip

We postpone the proof of Proposition \ref{MKDuality} to the Appendix and continue with our discussion.

The set of transport plans $\Pi(\mu_1, \ldots, \mu_n)$ carries a
natural topological structure: it is a compact convex subset of the
space of finite (signed) Borel measures  equipped with the weak
topology induced by the bounded continuous functions $C_b(\R^n)$.
(Compactness of $\Pi(\mu_1,\ldots, \mu_n) $ is essentially  a
consequence of Prokhorov's theorem, for a proof we refer the reader
to \cite[Lemma 4.4]{Vi09}.)

Subsequently we want to study the set of transport plans which are
also martingales. Therefore we will assume from now on that the
measures $\mu_1, \ldots,\mu_n$ are increasing in the convex order such that
$\M(\mu_1, \ldots, \mu_n)$ is a non-empty subset of $\Pi(\mu_1,
\ldots, \mu_n)$. It will be crucial for our purposes that also
$\M(\mu_1, \ldots, \mu_n)$ is compact in the weak topology. To
establish this  we need two auxiliary lemmas.

\begin{lem}\label{WeakExtended}
Let $c:\R^n\to \R$ be continuous and  assume  that there exists  a
constant $K$ such that
$$ |c(x_1, \ldots, x_n)| \leq K(1+|x_1|+\ldots+|x_n|)$$
for all $x_1\in X_1, \ldots, x_n\in X_n$. Then the mapping
$$\pi \mapsto \int_{\R^n} c\, d\pi$$
is continuous on $\Pi(\mu_1, \ldots, \mu_n).$
\end{lem}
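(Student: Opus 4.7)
The plan is to reduce the problem to the definition of weak convergence (continuity against \emph{bounded} continuous functions) by truncating $c$ and then controlling the truncation error \emph{uniformly} in $\pi \in \Pi(\mu_1,\ldots,\mu_n)$. Concretely, fix a continuous cutoff $\chi_R:[0,\infty)\to [0,1]$ with $\chi_R(t)=1$ for $t\le R$ and $\chi_R(t)=0$ for $t\ge R+1$, and set $c_R(x_1,\ldots,x_n):=c(x_1,\ldots,x_n)\,\chi_R(|x_1|+\ldots+|x_n|)$. Then $c_R\in C_b(\R^n)$, so for any net $\pi_\alpha\to\pi$ in $\Pi(\mu_1,\ldots,\mu_n)$ we have $\int c_R\,d\pi_\alpha\to\int c_R\,d\pi$ by the very definition of the weak topology, for each fixed $R$.

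The key step is to estimate the tail uniformly. Writing $f(x):=1+|x_1|+\ldots+|x_n|$ and $A_R:=\{f>R+1\}$, the linear growth of $c$ gives
\begin{align*}
\Big|\int c\,d\pi-\int c_R\,d\pi\Big|\le K\int_{A_R} f\,d\pi.
\end{align*}
Since $A_R\subset\bigcup_{i}\{|x_i|>R/n\}$, the quantity $\int_{A_R} f\,d\pi$ decomposes into a sum of terms of the form $\mu_i(\{|x_i|>R/n\})$ (which vanish as $R\to\infty$ and do not depend on $\pi$), ``diagonal'' terms $\int_{\{|x_i|>R/n\}}|x_i|\,d\mu_i$ (again independent of $\pi$ and vanishing by $\mu_i$-integrability of $|x_i|$), and ``off-diagonal'' terms $\int|x_j|\,\1_{\{|x_i|>R/n\}}\,d\pi$ with $i\ne j$.

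The off-diagonal terms are the only point where one has to work a little, since they a priori depend on the joint law. I would handle them by a secondary truncation at level $M$: split $|x_j|=|x_j|\wedge M+(|x_j|-M)_+$, so that $\int|x_j|\,\1_{\{|x_i|>R/n\}}\,d\pi\le M\,\mu_i(\{|x_i|>R/n\})+\int(|x_j|-M)_+\,d\mu_j$. Given $\eps>0$, one first picks $M$ large so that $\int(|x_j|-M)_+\,d\mu_j<\eps$ (using $|x_j|\in L^1(\mu_j)$), and then picks $R$ large so that $M\mu_i(\{|x_i|>R/n\})<\eps$; crucially, both bounds involve only the marginals, so they are uniform over $\pi\in\Pi(\mu_1,\ldots,\mu_n)$.

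Combining via the triangle inequality
\begin{align*}
\Big|\int c\,d\pi_\alpha-\int c\,d\pi\Big|\le \Big|\int(c-c_R)\,d\pi_\alpha\Big|+\Big|\int c_R\,d\pi_\alpha-\int c_R\,d\pi\Big|+\Big|\int(c_R-c)\,d\pi\Big|,
\end{align*}
the first and third summands are at most $K\int_{A_R}f\,d\pi$ with the uniform bound above, while the middle summand tends to $0$ for each fixed $R$ by weak convergence. Choosing $R$ first and then letting $\alpha$ run yields continuity of $\pi\mapsto\int c\,d\pi$ on $\Pi(\mu_1,\ldots,\mu_n)$. The only mild obstacle is the uniform tail control for the off-diagonal terms, which is precisely where the fact that all $\pi$ share the same marginals is used.
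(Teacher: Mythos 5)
Your proof is correct and follows essentially the same route as the paper, which simply asserts in one line that $\int_{\R^n\setminus[-a,a]^n}c\,d\pi\to 0$ uniformly in $\pi\in\Pi(\mu_1,\ldots,\mu_n)$ because the marginals have finite first moments. You have merely filled in the details of that uniform tail estimate (in particular the off-diagonal terms via the secondary truncation at level $M$), which the paper leaves implicit.
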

\begin{proof}
Since we assume that $\mu_1, \ldots, \mu_n$ have finite first
moments, $\int_{\R^n\setminus [-a,a]^n} c\, d\pi $ converges to $0$
uniformly in $\pi\in\Pi(\mu_1, \ldots, \mu_n)$ as $a\to \infty$.
 \end{proof}

\begin{lem}\label{DifferentChar}
Let $\pi\in \Pi(\mu_1, \ldots, \mu_n)$. Then the following are
equivalent.
\begin{enumerate}
\item $\pi \in \M(\mu_1, \ldots, \mu_n)$.
\item  For $1\leq j\leq n-1$ and for every continuous bounded function $\Delta:\R^{j}\to \R$ we have
$$ \int_{ \R^n}\Delta(x_1,\ldots ,x_{j}) (x_{j+1}-x_{j})\,d\pi(x_1,\ldots, x_{n}) = 0.$$
\end{enumerate}
\end{lem}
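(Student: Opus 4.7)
The plan is to prove the two implications separately, the first being a routine consequence of the martingale property and the second requiring a standard signed-measure density argument.

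For the direction $(1)\Rightarrow(2)$, fix $j$ and a continuous bounded $\Delta:\R^j\to\R$. Since $\pi\in\Pi(\mu_1,\ldots,\mu_n)$ and each $\mu_i$ has finite first moment, the function $(x_1,\ldots,x_n)\mapsto \Delta(x_1,\ldots,x_j)(x_{j+1}-x_j)$ is $\pi$-integrable. Conditioning on $\sigma(S_1,\ldots,S_j)$ and using the martingale identity $\E_\pi[S_{j+1}\mid S_1,\ldots,S_j]=S_j$, the integral collapses to zero.

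For $(2)\Rightarrow(1)$, I would first define, for each $1\le j\le n-1$, the finite signed Borel measure $\nu_j$ on $\R^j$ by
\begin{align*}
\nu_j(B)=\int_{\R^n} \1_{B}(x_1,\ldots,x_j)\,(x_{j+1}-x_j)\,d\pi(x_1,\ldots,x_n),
\end{align*}
which is well-defined and finite because $x_{j+1}-x_j$ is $\pi$-integrable. Hypothesis (2) says that $\int \Delta\,d\nu_j = 0$ for every $\Delta\in C_b(\R^j)$. The key step is then to conclude that $\nu_j\equiv 0$ as a signed measure. I would argue this by taking the Jordan decomposition $\nu_j=\nu_j^+-\nu_j^-$: both are finite positive Borel measures agreeing on $C_b(\R^j)$, and since finite positive Borel measures on $\R^j$ are determined by their integrals against continuous bounded functions (via the usual monotone class / Dynkin argument, or by approximating indicators of closed sets by continuous functions with values in $[0,1]$), this forces $\nu_j^+=\nu_j^-$, i.e.\ $\nu_j=0$.

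Once $\nu_j=0$, we have $\int_B(x_{j+1}-x_j)\,d\pi=0$ for every Borel $B\subseteq\R^j$, which by the definition of conditional expectation is equivalent to $\E_\pi[S_{j+1}-S_j\mid S_1,\ldots,S_j]=0$ $\pi$-a.s. Doing this for each $j=1,\ldots,n-1$ gives the discrete martingale property and therefore $\pi\in\M(\mu_1,\ldots,\mu_n)$. The main (minor) obstacle is simply the bookkeeping around passing from continuous bounded test functions to indicators of arbitrary Borel sets; this is a standard fact about signed measures on Polish spaces and requires no new idea.
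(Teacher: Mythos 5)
Your proof is correct and follows essentially the same route as the paper, which simply observes that condition (1) is equivalent to the vanishing of $\int \1_A(x_1,\ldots,x_j)(x_{j+1}-x_j)\,d\pi$ for all Borel $A\subseteq\R^j$ and then appeals to ``standard approximation techniques'' to pass between indicators and bounded continuous test functions. Your signed-measure formulation (finiteness of $\nu_j$ from the first-moment assumption, Jordan decomposition, and the fact that finite positive Borel measures are determined by their integrals against $C_b$) is exactly a careful implementation of that approximation step.
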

\begin{proof}
Plainly, (1)  asserts that whenever $A\subseteq R^j$, $j=1,\ldots,
(n-1)$ is Borel measurable, then
$$\int_{ \R^n}I_A(x_1,\ldots ,x_{j}) (x_{j+1}-x_{j})\,d\pi(x_1,\ldots, x_{n}) = 0.$$
Using standard approximation techniques one obtains that this is
equivalent to (2).
 \end{proof}

\begin{prop}
The set $\M(\mu_1, \ldots, \mu_n)$ is compact in the weak topology.
\end{prop}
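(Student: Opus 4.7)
The plan is to exhibit $\M(\mu_1,\ldots,\mu_n)$ as a closed subset of the ambient compact set $\Pi(\mu_1,\ldots,\mu_n)$ and invoke the compactness of $\Pi(\mu_1,\ldots,\mu_n)$ that was recalled just before the statement.

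First I would use Lemma \ref{DifferentChar} to rewrite membership in $\M(\mu_1,\ldots,\mu_n)$ as a family of integral equalities: $\pi\in \M(\mu_1,\ldots,\mu_n)$ if and only if for every $j\in\{1,\ldots,n-1\}$ and every $\Delta\in C_b(\R^j)$,
\[
\int_{\R^n} \Delta(x_1,\ldots,x_j)\,(x_{j+1}-x_j)\,d\pi(x_1,\ldots,x_n)=0.
\]
Thus $\M(\mu_1,\ldots,\mu_n)$ is the intersection, over all such $(j,\Delta)$, of the level sets $\{\pi:\int f_{j,\Delta}\,d\pi=0\}$ where $f_{j,\Delta}(x_1,\ldots,x_n):=\Delta(x_1,\ldots,x_j)(x_{j+1}-x_j)$.

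Next I would observe that each $f_{j,\Delta}$ is continuous on $\R^n$ and satisfies the linear-growth bound
\[
|f_{j,\Delta}(x_1,\ldots,x_n)|\leq \|\Delta\|_\infty\bigl(|x_j|+|x_{j+1}|\bigr)\leq \|\Delta\|_\infty\bigl(1+|x_1|+\ldots+|x_n|\bigr),
\]
so Lemma \ref{WeakExtended} applies: the map $\pi\mapsto \int f_{j,\Delta}\,d\pi$ is continuous on $\Pi(\mu_1,\ldots,\mu_n)$ in the weak topology. Consequently each level set $\{\pi\in\Pi(\mu_1,\ldots,\mu_n):\int f_{j,\Delta}\,d\pi=0\}$ is closed.

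Since $\M(\mu_1,\ldots,\mu_n)$ is the intersection of these closed subsets of $\Pi(\mu_1,\ldots,\mu_n)$, it is closed in $\Pi(\mu_1,\ldots,\mu_n)$. As $\Pi(\mu_1,\ldots,\mu_n)$ is weakly compact, $\M(\mu_1,\ldots,\mu_n)$ is weakly compact as well. The only mildly delicate point is the observation that the relevant test integrands $f_{j,\Delta}$ are \emph{unbounded} (they grow linearly in $x_j,x_{j+1}$), so plain weak convergence against $C_b(\R^n)$ would not suffice to close the martingale condition; it is precisely Lemma \ref{WeakExtended}, combined with the finiteness of the first moments of the $\mu_i$, that overcomes this obstacle.
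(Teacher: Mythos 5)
Your proof is correct and follows essentially the same route as the paper: write $\M(\mu_1,\ldots,\mu_n)$ via Lemma \ref{DifferentChar} as an intersection of level sets of the functionals $\pi\mapsto\int \Delta(x_1,\ldots,x_j)(x_{j+1}-x_j)\,d\pi$, use Lemma \ref{WeakExtended} to see these are closed, and conclude by compactness of $\Pi(\mu_1,\ldots,\mu_n)$. Your explicit verification of the linear-growth bound and the remark on why unboundedness of the integrands necessitates Lemma \ref{WeakExtended} are welcome details the paper leaves implicit.
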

\begin{proof}
Since $\M(\mu_1, \ldots, \mu_n)$ is contained in the compact set
$\Pi(\mu_1, \ldots, \mu_n)$ it is sufficient to prove that it is
closed. By Lemma \ref{DifferentChar}, $ \M(\mu_1, \ldots, \mu_n)$ is
the intersection of the sets
\begin{align}\label{ThePreimage}
\Big\{\pi \in \Pi( \mu_1, \ldots, \mu_n): \int_{ \R^n}f(x_1,\ldots
,x_{j}) (x_{j+1}-x_{j})\,d\pi(x_1,\ldots, x_{n}) = 0\Big\},
\end{align}
where $j =1,\ldots, n-1$ and $f:\R^j\to \R$ runs through all
continuous bounded functions.  By Lemma \ref{WeakExtended}
the sets in \eqref{ThePreimage} are closed.
 \end{proof}

 \medskip

\section{Proof of Theorem \ref{MainTheorem}}\label{sec:proof_of_main_thm}

Our argument combines a Monge-Kantorovich duality theorem (in the
form of Proposition \ref{MKDuality}) with  the following Min-Max theorem of
decision theory which we cite here from \cite[Thm. 45.8]{St85} (another reference is \cite[Thm. 2.4.1]{AdHe96}).
\begin{thm}\label{MinMax}
Let $K,T$ be convex subsets of vector spaces $V_1$ resp.\ $V_2$,
where $V_1$ is locally convex  and let $f:K\times T\to \R$. If
\begin{enumerate}
\item $K$ is compact,
\item $f(.,y)$ is continuous and convex on $K$ for every $y\in T$,
\item $f(x,.)$ is concave on $T$ for every $x\in K$
\end{enumerate}
then
\begin{equation*}
 \sup_{y\in T} \inf_{x\in K} f(x,y)= \inf_{x\in K}\sup_{y\in T} f(x,y).
\end{equation*}

\end{thm}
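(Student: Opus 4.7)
Only the inequality $\inf_{x\in K}\sup_{y\in T} f(x,y) \le \sup_{y\in T}\inf_{x\in K} f(x,y)$ has content, since the reverse inequality holds trivially for any real-valued function on a product. Denote the left-hand side by $\alpha$ and the right-hand side by $\beta$, and argue by contradiction: assume $\beta < \alpha$ and fix $\varepsilon > 0$ with $\beta + \varepsilon < \alpha$ (if $\alpha = +\infty$, replace $\alpha - \varepsilon$ throughout by an arbitrarily large real). The plan is to construct a single point $y^\ast \in T$ with $\inf_{x \in K} f(x,y^\ast) \ge \alpha - \varepsilon$, which contradicts the definition of $\beta$.

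The argument proceeds in three steps. First, a compactness reduction to finitely many $y$'s: for each $y \in T$ the superlevel set $B_y := \{x \in K : f(x,y) > \alpha - \varepsilon\}$ is open in $K$ by continuity of $f(\cdot,y)$, and $\{B_y\}_{y \in T}$ covers $K$ because $\sup_y f(x,y) \ge \alpha$ for every $x \in K$. Compactness of $K$ extracts a finite subcover corresponding to points $y_1,\ldots,y_k \in T$, which reformulates as
\[ \inf_{x \in K}\,\max_{1 \le i \le k} f(x,y_i) \;\ge\; \alpha - \varepsilon. \]
Second, a finite-dimensional minimax: on $K \times \Delta_k$, with $\Delta_k \subset \R^k$ the standard simplex, the function $g(x,\lambda) := \sum_i \lambda_i f(x,y_i)$ is continuous and convex in $x$ (as a convex combination of such) and affine in $\lambda$. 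Since $\max_{\lambda \in \Delta_k} g(x,\lambda) = \max_i f(x,y_i)$, the classical minimax theorem for a compact set against a simplex (one Hahn--Banach separation applied to the convex hull of the $k$-tuples $(f(x,y_1),\ldots,f(x,y_k))$ in $\R^k$) yields
\[ \inf_{x \in K}\max_{\lambda \in \Delta_k} g(x,\lambda) \;=\; \max_{\lambda \in \Delta_k}\inf_{x \in K} g(x,\lambda), \]
producing $\lambda^\ast \in \Delta_k$ with $\inf_x g(x,\lambda^\ast) \ge \alpha - \varepsilon$. Third, an averaging in $T$: set $y^\ast := \sum_i \lambda_i^\ast y_i$, which lies in $T$ by convexity of $T$; concavity of $f(x,\cdot)$ gives $f(x, y^\ast) \ge g(x, \lambda^\ast)$ for every $x\in K$, so $\inf_x f(x,y^\ast) \ge \alpha - \varepsilon$, as wanted.

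The main obstacle is the finite-dimensional minimax of the second step; this is the only place where a separation argument is genuinely invoked, and it can be done in $\R^k$ by separating the convex set $\{v \in \R^k : \exists x\in K,\ v_i \ge f(x,y_i)\} + \R_{\ge 0}^k$ from the open convex cone of vectors with all coordinates below $\alpha - \varepsilon$. Everything else is a direct combination of the hypotheses: convexity of $f(\cdot,y)$ is what makes $g(\cdot,\lambda)$ itself convex continuous on the compact $K$ (so that the minimax applies), while concavity of $f(x,\cdot)$ together with convexity of $T$ is exactly what converts the minimax-selected mixture $\lambda^\ast$ into an honest point $y^\ast \in T$ delivering the contradiction.
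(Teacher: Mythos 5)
Your argument is correct. Note that the paper itself does not prove Theorem \ref{MinMax} at all --- it is quoted verbatim from Strasser \cite[Thm.~45.8]{St85} --- so there is no in-paper proof to compare against; what you have written is the classical Kneser/Ky Fan proof of this minimax theorem, and it is sound. The three steps all check out: the open superlevel sets $B_y$ do cover $K$ because $\sup_y f(x,y)\ge\alpha$ pointwise, compactness yields $\inf_{x}\max_i f(x,y_i)\ge\alpha-\varepsilon$; the set $C=\{v\in\R^k:\exists x\in K,\ v_i\ge f(x,y_i)\ \forall i\}$ is convex precisely because $K$ is convex and each $f(\cdot,y_i)$ is convex, it is disjoint from the open convex set $\{v:v_i<\alpha-\varepsilon\ \forall i\}$, and since $C$ is upward closed the separating functional is a nonnegative vector which normalizes to $\lambda^\ast\in\Delta_k$ with $\inf_x\sum_i\lambda_i^\ast f(x,y_i)\ge\alpha-\varepsilon$ (as you observe, you do not even need the full finite minimax equality here, only this one-sided conclusion that the separation hands you directly); and concavity of $f(x,\cdot)$ plus convexity of $T$ turns the mixture into an honest $y^\ast\in T$, contradicting $\beta<\alpha-\varepsilon$. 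Your handling of $\alpha=+\infty$ is also fine, and $\alpha=-\infty$ cannot occur since $\sup_y f(\cdot,y)$ is lower semicontinuous on the compact $K$. One remark worth making: your proof never uses local convexity of $V_1$ (the only separation takes place in $\R^k$), so you have in fact proved a slightly stronger statement than the one cited; that is a feature, not a gap.
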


\begin{proof}[of Theorem \ref{MainTheorem}.]

As we want to show that the subhedging portfolios can be formed
using just call options, we will restrict ourselves to dual
candidates $\Psi_{(u_i),(\Delta_j)}$ satisfying $u_i\in\SS,
i=1,\ldots, n$ (and $\Delta_j\in C_b(\R^j), j=1,\ldots, n-1$).

If the assertion of Theorem \ref{MainTheorem} holds true for a
function $\Phi$ and if $u_1, \ldots, u_n\in \SS$ then the assertion
carries over to $\Phi'= \Phi + u_1\oplus\ldots\oplus u_n.$ Therefore
we may assume without loss of generality that $\Phi\geq 0$.

Moreover for now we make the additional assumption that $\Phi\in C_b(\R^n)$; we will get rid of this extra condition later.

We will apply Theorem \ref{MinMax} to the compact convex set
$K=\Pi(\mu_1, \ldots, \mu_n)$, the convex set $T = C_b(\R)\times
\ldots \times C_b(\R^{n-1})$ of $(n-1)$-tuples of continuous bounded
functions on $\R^j, j=1,\ldots, (n-1)$ and the function
\begin{equation}
 f(\pi,(\Delta_j))=\int \Phi(x_1, \dotsc, x_n) - \sum_{j = 1}^{n-1}\Delta_j(x_1, \dotsc, x_{j})(x_{j+1} - x_{j})\, d\pi(x_1, \dotsc , x_n).
\end{equation}
Clearly the assumptions of Theorem \ref{MinMax} are satisfied, the
continuity of $f(.,(\Delta_j))$ on $\Pi(\mu_1, \ldots, \mu_n)$ being
a consequence of Lemma \ref{WeakExtended}.

We then find
\begin{align}
\label{nr11}
D& \geq \sup_{u_i\in\SS,\, \Delta_j\in C_b(\R^{j}),\, \Psi_{(u_i),(\Delta_j)}\leq \Phi} \sum_{i = 1}^{n} \int u_i\, d\mu_i \\
  \label{nr21}
&= \sup_{\Delta_j\in C_b(\R^{j})}\  \sup_{u_i\in\SS,\, \sum_{i = 1}^{n} u_i(x_i) \leq \Phi(x_1, \dotsc , x_n) - \sum_{j = 1}^{n - 1} \Delta_j(x_1, \dotsc, x_j)(x_{j+1} - x_j) } \sum_{i = 1}^{n} \int u_i\, d\mu_i \\
\label{nr31}
&= \sup_{\Delta_j\in C_b(\R^{j})}\  \inf_{\pi\in\Pi(\mu_1,\dotsc,\mu_n)}\int \Phi(x_1,\dotsc ,x_n)- \sum_{j = 1}^{n-1}\Delta_j(x_1,\dotsc,x_j)(x_{j+1}-x_j)\, d\pi\\
\label{nr41}
&= \inf_{\pi\in\Pi(\mu_1,\dotsc,\mu_n)}\ \sup_{\Delta_j\in C_b(\R^{j})} \int \Phi(x_1,\dotsc ,x_n)- \sum_{j = 1}^{n-1}\Delta_j(x_1,\dotsc,x_j)(x_{j+1}-x_j)\, d\pi\\
\label{nr51} &=\inf_{\m\in\M(\mu_1,\dotsc,\mu_n)} \int
\Phi(x_1,\dotsc ,x_n)\, d\Q = P.
\end{align}
Here Proposition \ref{MKDuality} is applied to  
\mbox{$\Phi(x_1,\dotsc ,x_n)- \sum_{j =
1}^{n-1}\Delta_j(x_1,\dotsc,x_j)(x_{j+1}-x_j)$}  to establish the
equality between  \eqref{nr21} and \eqref{nr31} and  the equality of
\eqref{nr31} and \eqref{nr41} is guaranteed by Theorem \ref{MinMax}.
Finally let us justify the equality between  \eqref{nr41} and  \eqref{nr51}: indeed if $\pi$ is not a martingale measure, then 
by Lemma \ref{DifferentChar} for some $j$ there is a function $\Delta_j$ such that  
$$B=\int \Delta_j(x_1,\dotsc,x_j)(x_{j+1}-x_j)\, d\pi(x_1,\dotsc,x_n)$$ does not vanish. By appropriately scaling $\Delta$ the value of $B$ can be made arbitrarily large.

\medskip

Next assume that $\Phi:\R^n\to [0,\infty]$ is merely lower
semi-continuous and  pick a sequence of bounded continuous functions
$\Phi_1\leq \Phi_2\leq\ldots$ such that $\Phi= \sup_{k\geq 0}
\Phi_k$.  In the following paragraph we will write $P(\Phi),
D(\Phi), P(\Phi_k)$, resp.\ $D(\Phi_k)$ to emphasize the dependence
on the cost function. For each $k$ pick $\m _k\in \Pi (\mu_1,\ldots,
\mu_n)$ such that
$$ P (\Phi_k)\geq \int \Phi\, d\m _k-1/k.$$
Passing to a subsequence if necessary, we may assume that $(\m _k)$
converges weakly to some  $\m \in \Pi (\mu_1,\ldots, \mu_n)$. Then
\begin{align}\begin{split}
P (\Phi)\leq  \int \Phi\, d\m  =\lim_{m\to \infty}\int \Phi_m \, d\m &=\lim_{m\to \infty}\left(\lim_{k\to \infty}\int \Phi_m \, d\m _k\right)\\
&\leq \lim_{m\to \infty} \left(\lim_{k\to \infty}\int \Phi_k \, d\m
_k\right)=\lim_{k\to\infty}P  (\Phi_k).
\end{split}\end{align}
Since $ P  {(\Phi_k)}\leq P  (\Phi)$ it follows that $D  (\Phi)\geq
D  (\Phi_k)=P  ({\Phi_k})\uparrow P  (\Phi)$.

 \medskip

It remains to prove that the optimal value of the primal problem is
attained. To establish this, we use the lower semi-continuity of
$\int \Phi\, d\pi$ on $\Pi(\mu_1, \ldots, \mu_n)$:
 if  a sequence of measures $(\pi_k)$ in $\Pi(\mu_1, \ldots, \mu_n)$ converges weakly to a measure $\pi$, then
\begin{align}\liminf_{k\to\infty} \int \Phi\,d\pi_k\geq \int \Phi\, d\pi.\label{LSCC}\end{align}
 We refer the reader to \cite[Lemma 4.3]{Vi09} for a proof of this assertion.

If $P=\infty$, the infimum is trivially attained, so assume
$P<\infty$ and pick a sequence $(\Q_k)$ in $\M(\mu_1, \ldots,
\mu_n)$ such that $P=\lim_k\int \Phi\,d\Q_k$. As $\M(\mu_1, \ldots,
\mu_n) $ is compact, $(\Q_k)$ converges to some measure $\Q$ along a
subsequence and $\Q$ is a primal minimizer  by \eqref{LSCC}.
 \end{proof}

As we have just seen, the existence of a primal optimizer $\Q$ is basically a consequence of the compactness of the set of all martingale transport plans. The dual set of sub-hedges does not exhibit nice compactness properties and as we already mentioned the dual supremum is not necessarily attained (Proposition \ref{DualEx} below). Although we are not able to give a positive criterion in this direction, it seems worthwhile to comment on the consequences of attainment of the dual problem.

 Assume that there exists a dual maximizer, i.e.\ that there exist $\mu_i$ integrable functions $u_i$ and continuous bounded functions $\Delta_j $ such that the corresponding subhedge  
 (cf.\ \eqref{DualCandidate}) satisfies 
\begin{align}\label{IneqPart} \Psi_{(u_i), (\Delta_j)}\leq \Phi\end{align} and $$ \sum_{i=1}^n \E_{\mu_i} [u_i]=P.$$ Let $\Q$ be a primal optimizer, i.e.\ a martingale measure satisfying the given marginal constraints as well as $\E_\Q[\Phi]= P$. 
Then we have 
$$ 0\leq \E_\Q[ \Phi-\Psi_{(u_i), (\Delta_j)}] = P - D=0.$$ 
As a consequence,  equality holds $\Q$-a.s.\ in \eqref{IneqPart}.
The financial interpretation is that under the market model $\Q$, the payoff $\Phi$ is perfectly replicated through the semi-static hedge corresponding to $(u_i), (\Delta_j)$.

\section{Further analysis in the two dimensional case.}\label{sec:further_analysis}
Throughout this section we focus on the two-period case, i.e.\ $n=2$. 
We start with two examples which illustrate (the general) Theorem \ref{MainTheorem}. Then we show that the dual supremum is not necessarily attained. Finally we explain a conjugacy relation which is relevant for the dual problem and resembles  a well-known concept from the classical theory of optimal transport. 

\medskip

\subsection{A numerical example: forward-start options.}\label{sec:further_analysis:subsec:numerical_example}

We consider the problem to find optimal upper and lower bounds for forward-start options with payoffs $$\Phi_K(s_1,s_2)=(s_2 - K s_1)^+, \quad K=0.5, \ldots, 1.5.$$ Recently, Hobson and Neuberger \cite{HoNe12} have obtained 
 interesting results on model-independent bounds for the  forward-start straddle $|s_2-s_1|$. Since  $|s_2 - s_1|=2(s_2 - s_1)^+-(s_2-s_1)$, this is equivalent  to the case $K=1$, $\Phi_1(s_1,s_2)= (s_2 - s_1)^+ $. An unfortunate feature is that no fully explicit solution is known for generic measures $\mu_1$ and $\mu_2$. In \cite[Section 9]{HoNe12}  numerical upper bounds are obtained in the  cases where $\mu_1, \mu_2$ are given as uniform resp.\ log-normal distributions.
 
We will consider the cases of different strikes and laws $\mu_1, \mu_2$ inferred from market data. 
 By using a linear programming  algorithm, we have computed numerically the optimal lower and upper bounds for different values of $K$. 
 
 The measures $\mu_1$ and
$\mu_2$ are deduced from the prices of call options written on the DAX 
 (pricing date = 2nd Feb.\ 2012) with $t_1 =  1$ year and $t_2= 1.5$ years with $m=18$ strikes ranging from   $30\% $ to $200\%$ of the current spot price $s_0$. The dual for the upper bound reads as (setting $K_{1,0} = K_{2,0}  = 0$)
\begin{align}
D=&\ \inf_{b,c_{i,l},\Delta} b +\sum_{i=1}^2 \sum_{l=0}^{m}c_{i,l}  \mathcal C(t_i,K_{i,l})   \\
  \mbox{s.t. }&\notag \\
F(s_1,s_2):=&\ b +\sum_{i=1}^2 \sum_{l=0}^{m}c_{i,l} (s_i-K_{i,l})_+  + \Delta(s_1)(s_{2}-s_{1})   \geq (s_2- K s_1)^+, \ (s_1,s_2)\in \R_+^2.
\end{align}
 The additional term $\Delta_0(s_0)(s_1 - s_0)$ has been incorporated by considering a vanilla option at $t_1$ with a zero strike. 
 Note that the  function $s_2\mapsto F(s_1,s_2)-(s_2- K s_1)^+$ is piecewise linear with respect to  $s_2$ and therefore attains its extremal values at the points $s_2 = \{ K_{2,j}\}_{j=1,\ldots,m}$, $s_2 = 0$, $s_2 = \infty$, $s_2 = Ks_1$. The above constraints  therefore reduce to $m+3$ constraints parametrized by $s_1$. As a consequence this low-dimensional  linear program can be efficiently implemented by using a classical simplex algorithm \cite{PrTeVeFl07} and by discretizing the spot value $s_1$ on a space grid. We have compared the upper and lower bounds against the prices produced by  models commonly used by practitioners (see Fig.\ \ref{Fig1}): the local volatility model (in short LV) \cite{Du94}, Bergomi's model \cite{Be05} which is a  two-factor variance curve model and finally the local Bergomi model \cite{He09} which has the property to be perfectly calibrated to vanilla smiles at $t_1$ and $t_2$. The LV and local Bergomi models have been calibrated to the DAX implied volatility market. The 
Bergomi model has been calibrated to the variance-swap term structure. As expected, the prices as produced by the LV and local Bergomi models -- consistent with the marginals $\mu_1$ and $\mu_2$ -- are within  our bounds.\footnote{We would like to emphasize that the lower/upper bounds corresponding to different strikes $K$ are attained by different martingale measures. This is not the case if we 
 do not include the martingality constraint as in this case the upper/lower bounds are attained by the co-monotone resp.\ anti-monotone coupling for each strike $K$  (see for instance \cite[Section 2.2.2]{Vi03}). }
 
\begin{figure}[h] 
 \caption{Lower/Upper bounds versus (local) Bergomi and LV models for forward-start options (quoted in Black-Scholes volatility $\times 100$). Parameters for the Bergomi model: $\sigma=2$, $k_1=4$, $k_2=0.125$, $\rho=34.55\%$, $\rho_\mathrm{SX}=-76.84\%$, $\rho_\mathrm{SY}=-86.40\%$. 
 As the Bergomi model is not calibrated to the vanilla smiles, it  may yield implied volatilities below the lower bound, cf.\ Strike $K=1.5$. Notice also that for $K=1.5$ the implied volatility of the Bergomi and LV  model coincides with the lower bound to the level of numerical accuracy.}.\label{Fig1}
 \centering
   \includegraphics[scale=0.3]{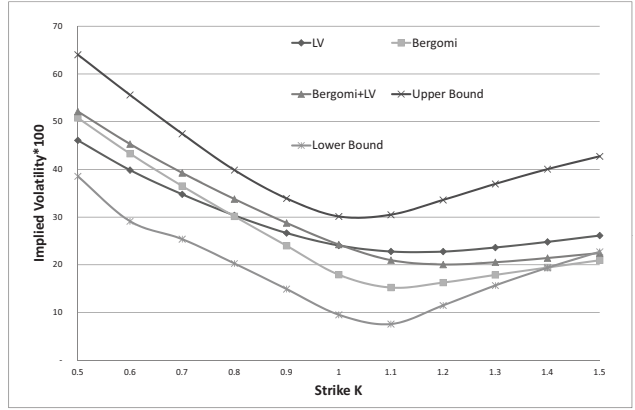}
\end{figure}

\no We have also plotted $F(s_1,s_2)$ as a function of $s_1$ and $s_2$ for the at-the-money forward-start option (i.e. $K=1$, see Fig. \ref{Fig2}) to check the super-replication strategy.  

 \begin{figure}[h]
 \caption{Super-replication strategy for $K=1$: $F(s_1,s_2)$ as a function of ${s_1 \over s_0}$ and ${s_2 \over s_0}$.} \label{Fig2}
   \centering
   \includegraphics[scale=0.3,angle=90]{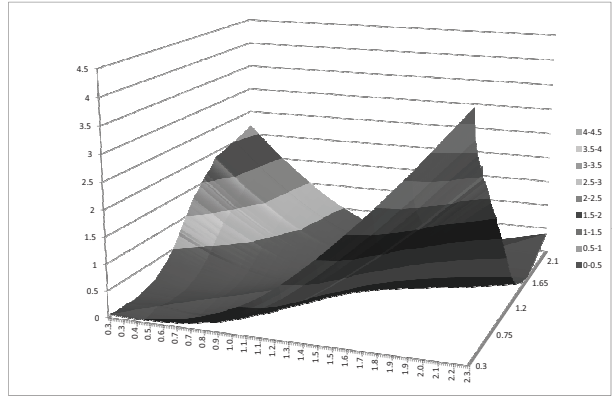}
\end{figure}

 Our result shows that forward-start options are poorly constrained by vanilla smiles. As a conclusion, the practice in the old-quant community to calibrate stochastic volatility models on vanilla smiles to price exotic options (depending strongly on forward volatility) is inappropriate.

 Additional numerical examples are investigated in a companion paper \cite{He11a}. In the case of Asian options the bounds are tighter, indicating that this option can be fairly well hedged with vanilla options. 
 
  We would like to highlight that for  general exotic options, our dual bound can be framed into a large-scale semi-infinite linear program whose numerical implementation requires advanced simplex algorithm such as a primal-dual algorithm within a cutting-plane algorithm \cite{He11a}.

\subsection{Analysis of a theoretical example}\label{sec:further_analysis:subsec:theoretical_example}

We consider a forward-start straddle with payoff function $\Phi (S_1,S_2)=|S_2 - S_1|$; as above we assume that the marginal laws $\mu_1, \mu_2$ are  fixed. 
As mentioned before, Hobson and Neuberger \cite{HoNe12} treat the problem to find a market model which \emph{maximizes} the price $\E_\Q [\Phi (S_1,S_2)]$; specific examples are worked out in detail.

Here we focus on the problem to  minimize $\E_\Q [\Phi (S_1,S_2)]$ in a concrete example.
The marginals $\mu_1,\mu_2$ are defined by the respective densities (where we write $\lambda$ for the Lebesgue measure)
$$\dm(s_1)=\frac12\1_{[-1,1]},\quad  \dn(s_1) = \frac{2+s_1}{3}\1_{[-2,-1]} + \frac{1}{3} \1_{[-1,1]} + \frac{2-s_1}{3}\1_{[1,2]},$$
cf.\ Figure \ref{fig_support_marginals} below.
Recall that the primal, resp.\ dual problem is then given by
   \begin{align*}
    P =&  \inf_{\Q\in \M(\mu_1,\mu_2)} \E_{\mathbb{Q}}[|S_2 - S_1|], \\
    D =& \sup_{u_1, u_2: \exists \Delta, u_1(s_1)+u_2(s_2)+\Delta (s_1)(s_2-s_1) \leq |s_2-s_1|} \E_{\mu_1}[u_1] + \E_{\mu_2}[u_2].
   \end{align*}
By Theorem \ref{MainTheorem} we know that there is no duality gap, i.e.\ $P=D$. Our aim is to determine the primal minimizer $\Q$ as well as dual maximizers $u_1, u_2, \Delta$. We follow the common procedure of guessing and verification: i.e.\ making various (unjustified) assumptions we will first produce explicit candidates. Then it is possible to verify rigorously that these candidates indeed solve the given problem.

 Due to Hobson   \cite{Ho12}  (see also (\cite[Section 6]{BeJu12}) one expects that the primal minimizer $\Q$  has a very particular structure:  
 Writing  $(\Q_{s_1})_{s_1\in [-1,1]}$ for the disintegration\footnote{In probabilistic terms, the measure $\Q_{s_1}$ is the conditional distribution of $S_2$ under $\Q$ given that $S_1={s_1}$.} of $\Q$ w.r.t.\ $\mu_1$, each measure $\Q_{s_1}$ will be concentrated on three points. More specifically we \emph{guess}\footnote{We emphasize that while this simple guess works in the present setting, the situation is more subtle for general distributions. } that there exist monotone decreasing functions $f\colon[-1,1] \to [-2,-1], g\colon[-1,1]\to [1,2]$ such that  $\textrm{supp}(\Q_{{s_1}}) = \{f({s_1}),{s_1},g({s_1})\}$.

Figure \ref{fig_support_marginals} depicts the measures $\mu_1, \mu_2$ and for each particle starting in ${s_1}\in [-1,1]$ the possible positions $f({s_1}),{s_1},g({s_1})$ at time $t=2$. 

\begin{figure}[ht]
  \caption{Marginals and Support of Primal Optimizer}\label{fig_support_marginals}
  \centering
    \includegraphics{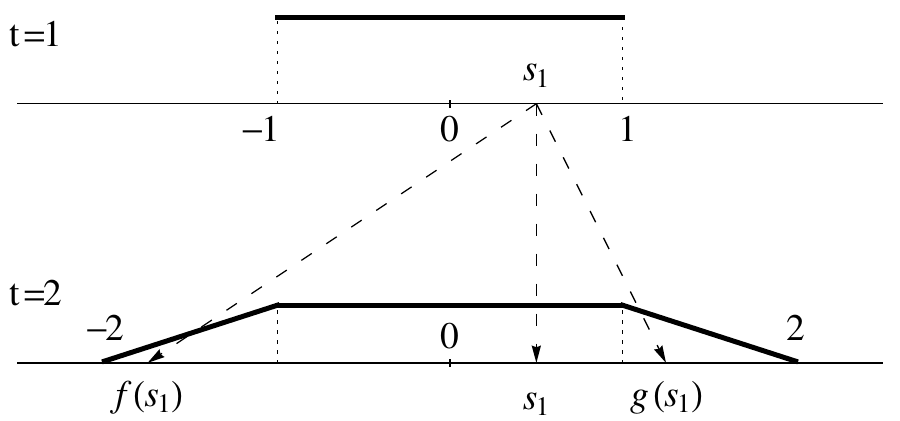}
\end{figure}

I.e., as much mass as possible remains at its place, the rest is either moved to the interval on the left of $[-1,1]$ (via $f$), or to the right (via $g$).
For ${s_1}\in [-1,1]$, we write the measure $\Q_{s_1}$ as 
$$\Q_{s_1} = a({s_1}) \delta_{f({s_1})}+ b({s_1})\delta_{s_1}+ c(s_1) \delta_{g(s_1)}, \quad \mbox{where $a(s_1)+b(s_1)+c(s_1)=1$}.$$
Taking for granted that $f, g$ are sufficiently smooth, the marginal conditions on $\Q$ translate to 
\begin{align*}
  \dm(s_1) a(s_1) &= (-f'(s_1)) \dn(f(s_1)), \\
  \dm (s_1)b(s_1) &=  \dn(s_1)  \\ 
  \dm(s_1) c(s_1) &= (-g'(s_1))\dn(g(s_1)).
\end{align*}

Thus $b(s_1)= 2/3$ and $a(s_1), c(s_1)$ can be expressed in terms of the functions $f,g$, i.e. $$a(s_1)=-f'(s_1) \dn(f(s_1)) / \dm(s_1),\qquad c(s_1)= - g'(s_1) \dn(g(s_1)) / \dm(s_1).$$ From $a(s_1)+ b(s_1)+ c(s_1) =1$, we obtain for $s_1\in [-1,1]$ the equation 
\begin{align}
\dn(f(s_1)) (-f'(s_1)) + \dn(g(s_1)) (-g'(s_1)) &=  \frac 13 \dm(s_1)= \frac 16   .			\label{dgl_supp_mass}
\end{align}
The martingale property is expressed by $f(s_1) a(s_1)+ s_1 b(s_1) + g(s_1) c(s_1)= s_1$. In terms of $f,g$ this amounts to 
   \begin{align}
      \dn(f(s_1)) f(s_1) (-f'(s_1)) + \dn(g(s_1))g(s_1) (-g'(s_1)) &= \frac {s_1}3\dm({s_1}) =\frac {s_1}6.		\label{dgl_supp_mart} 
   \end{align}

Adding the initial conditions $f(1) = -2$ and $g(1)=1$, the differential equations \eqref{dgl_supp_mass}, \eqref{dgl_supp_mart} have the unique solution 
   \begin{equation}
      f({s_1}) = -(3+{s_1})/2, \qquad  g({s_1}) = (3-{s_1})/2.
   \end{equation}
These functions $f,g$ determine a martingale measure $\Q\in \M(\mu_1, \mu_2)$ which is our candidate optimizer for the primal problem.

A dual optimizer $(u_1,u_2, \Delta)$ consists of functions $u_1, u_2, \Delta:\R\to\R$ satisfying
   \begin{equation}
      u_1({s_1}) + \Delta({s_1})({s_2}-{s_1}) \le |s_2-{s_1}| - u_2({s_2}) \label{dual_constraints_ex}
   \end{equation}
 for all $({s_1},{s_2}) \in \R^2$.  
From the considerations at the end of Section \ref{sec:proof_of_main_thm} we know that equality should hold in \eqref{dual_constraints_ex} for all $({s_1},{s_2})$ in the support of the primal minimizer. Since we anticipate that $\Q$ is this minimizer, we expect equality in \eqref{dual_constraints_ex} for ${s_2} \in \{f({s_1}),{s_1},g({s_1})\}, {s_1}\in [-1,1]$. 
Writing $u_{2, l}:=u_2\rvert_{(-\infty,-1]}$, $u_{2,m}:=u_2\rvert_{[-1,1]}$ and \linebreak \mbox{$u_{2, r}:=u_2\rvert_{[1,\infty)}$}, this amounts to 
\begin{align}\label{FirstSys}\begin{split}
u_1({s_1}) + \Delta({s_1})(f({s_1})-{s_1}) &= |f({s_1})-{s_1}| - u_{2,l}(f({s_1})),\\
u_1({s_1}) + \Delta({s_1})({s_1}-{s_1}) &= |{s_1}-{s_1}| - u_{2,m}(s_1)\quad\quad \Longleftrightarrow \ u_1({s_1})= -u_{2,m}({s_1})  \\
u_1({s_1}) + \Delta({s_1})(g({s_1})-{s_1}) &= |g({s_1})-{s_1}| - u_{2,r}(g({s_1})),
\end{split}
\end{align}
for $s_1\in[-1,1]$. 
Furthermore it is reasonable to assume that for fixed ${s_1}\in [-1,1]$, the affine function $ s_2 \mapsto  u_1({s_1}) + \Delta({s_1})({s_2}-{s_1})$ is tangent to the function $ s_2 \mapsto |{s_2}-{s_1}| - u_2({s_2}) $ if ${s_2}$ equals $f({s_1})$ resp.\ $g({s_1})$. 
This leads us to identify the slope $\Delta({s_1})$ of this affine function with the derivatives of the right hand side for ${s_2}\in \{f({s_1}), g({s_1})\}$
\begin{align}\label{SecondSys}
\begin{split}
\Delta({s_1}) &= \partial_{s_2} \Big(({s_1}-{s_2}) - u_{2,l}({s_2})\Big)\big\rvert_{{s_2}=f({s_1})} = -1-u_{2,r}'(f({s_1}))\\
\Delta({s_1}) &= \partial_{s_2} \Big(({s_2}-{s_1}) - u_{2,r}({s_2})\Big)\big\rvert_{{s_2}=g({s_1})} = 1 - u_{2,l}'(g({s_1})).
\end{split}
\end{align}
The equations \eqref{FirstSys} resp.\ \eqref{SecondSys} are solved by 
\begin{align*}
&u_1({s_1}) = (9-5{s_1}^2)/6=-u_{2,m}({s_1}), \quad \Delta({s_1}) = -2 {s_1}/3.\\
 &u_{2,l}({s_2}) = -3 -3{s_2} -2s_2^2 /3, \quad  u_{2,r}({s_2}) = -3 + 3{s_2} -2 s_2^2 /3.
\end{align*}
Setting $u_2 = u_{2,l} \1_{(-\infty,\-1]} + u_{2,m} \1_{[-1,1]} + u_{2,r} \1_{[1,\infty)}$, we have thus found a ``reasonable'' candidate solution for the dual problem. 
It is then straightforward  to verify that $(u_1, u_2, \Delta)$ is admissible, i.e., satisfies \eqref{dual_constraints_ex}.

\medskip 

To verify that $\Q$ resp.\ $(u_1, u_2, \Delta)$ are in fact solutions of the primal resp.\ dual problem we evaluate the corresponding functionals
\begin{align*}
 & \E_{\Q}[|S_2 - S_1|] = \int_{} \int_{} |{s_2}-{s_1}| \, d\Q_{s_1}({s_2}) \, d\mu_1({s_1}) 
= \frac{1}{3}, \\
 & \E_{\mu_1}[u_1] + \E_{\mu_2}[u_2]
   = \int_{-1}^1 u_1({s_1}) \dm({s_1})\, d{s_1} + \int_{-2}^{2} u_{2}({s_2})\dn({s_2}) \, d{s_2} 
   = \frac{1}{3}.
\end{align*}
Hence $P= \frac13 = D$ and we conclude that $\Q$ resp.\ $(u_1, u_2, \Delta)$ are indeed the desired solutions.

\subsection{Non-Existence of dual maximizers}\label{sec:further_analysis:subsec:counterexample}

In the classical optimal transport problem, the optimal value of the dual problem is
attained provided that the cost function is bounded (\cite[Theorem
2.14]{Ke84}) or satisfies appropriate moment conditions
(\cite[Therorem 2.3]{AmPr03}).

This is not the case in our present setting as Proposition \ref{DualEx} shows that  the dual
supremum \eqref{Dual} is not necessarily attained even if $\mu_1,\mu_2$ are compactly supported. 
Our counterexample fits into the framework\footnote{Formally Hobson and Neuberger are interested to \emph{maximize} the payoff of $|S_2-S_1|$ while we are interested to \emph{minimize} the payoff $-|S_2-S_1|$. Mathematically, the two problems are of course the same. We haven chosen the latter formulation to be consistent with the notation in our main result Theorem \ref{MainTheorem}.} of \cite{HoNe12}, i.e.\ we consider two periods and an exotic option with payoff $ -|S_2-S_1|$.

\begin{prop}\label{DualEx}
Let $\mu_2$ be the uniform distribution on the interval $[0,2]$ and $$\Phi({s_1},{s_2})=-|{s_2}-{s_1}|.$$ There exists a measure $\mu_1$, concentrated on countably many atoms, such that the (finite) dual value is not attained.

Moreover, there do not exist functions $u_1, u_2, \Delta:\R\to\R$ such that  
\begin{align}
\label{EqualityPart}
\begin{split}
u_1({s_1})+u_2({s_2})+\Delta({s_1})({s_2}-{s_1})&\leq -|{s_2}-{s_1}|, \quad \mbox{for all  $({s_1},{s_2})\in \R^2$},\\
u_1({s_1})+u_2({s_2})+\Delta({s_1})({s_2}-{s_1})&= -|{s_2}-{s_1}|, \quad \mbox{for $\Q$-a.a.\ $({s_1},{s_2})\in \R^2$,} 
\end{split}
\end{align}
where $\Q$ is a minimizer of the primal problem.
\end{prop}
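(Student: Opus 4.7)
I would exhibit a discrete $\mu_1$ for which the primal minimizer $\Q\in\M(\mu_1,\mu_2)$ has an explicit ``left--right'' transport structure, and then derive a contradiction from the tangency conditions that \eqref{EqualityPart} imposes on $(u_1,u_2,\Delta)$ along $\mathrm{supp}(\Q)$.

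First, I would take $\mu_1=\sum_n p_n\delta_{a_n}$ with countably many atoms $a_n\in(0,2)$ and masses $p_n$ chosen so that $\mu_1$ precedes $\mu_2$ in convex order; Strassen's theorem then gives $\M(\mu_1,\mu_2)\ne\emptyset$, and Theorem~\ref{MainTheorem} produces a primal minimizer $\Q$. Adapting the Hobson--Neuberger heuristic \cite{HoNe12} to the minimization of $\E[-|S_2-S_1|]$, one expects each conditional $\Q_{a_n}$ to be concentrated on exactly two points $f_n<a_n<g_n$, with $f_n,g_n$ and the splitting probabilities pinned down jointly by the martingale condition and the marginal constraint $\sum_n p_n\Q_{a_n}=\mu_2$. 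I would verify this guess by matching it with a dual-admissible candidate of the same value.

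Now assume $(u_1,u_2,\Delta)$ satisfies \eqref{EqualityPart}. For each $n$, setting $s_1=a_n$ in the pointwise inequality gives $u_2(s_2)\le L_n(s_2)$ for all $s_2\in\R$, where
\begin{equation*}
L_n(s_2):=-u_1(a_n)-|s_2-a_n|-\Delta(a_n)(s_2-a_n)
\end{equation*}
is a ``tent'' function with apex at $a_n$, left slope $1-\Delta(a_n)$ and right slope $-1-\Delta(a_n)$. The prescribed equality at $(a_n,f_n)$ and $(a_n,g_n)$ forces $L_n$ to be tangent to $u_2$ from above at $f_n$ and $g_n$; in the spirit of \eqref{SecondSys} this pins down the one-sided Dini-derivatives of $u_2$ via
\begin{equation*}
u_2'(f_n-)\ge 1-\Delta(a_n)\ge u_2'(f_n+),\qquad u_2'(g_n-)\ge -1-\Delta(a_n)\ge u_2'(g_n+).
\end{equation*}

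The hardest step, and the crux of the construction of $\mu_1$, is to arrange the atoms so that the sequences $(f_n),(g_n)$ ``link'': for an appropriate subsequence one has $g_{n_k}=f_{n_{k+1}}$, i.e.\ the right target of one atom coincides with the left target of the next. At the common point, the displayed inequalities are compatible only if $\Delta(a_{n_{k+1}})-\Delta(a_{n_k})=2$, so that $\Delta(a_{n_k})$ grows linearly along the chain; evaluating the pointwise inequality \eqref{EqualityPart} with $s_1=a_{n_k}$ at a fixed $s_2$ outside a neighbourhood of the accumulation point of the atoms then forces $u_1(a_{n_k})\to-\infty$, and plugging back into the equality at $(a_{n_k},f_{n_k})$ forces $u_2$ to blow up on the bounded set $[0,2]$, contradicting its real-valuedness. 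Once $\mu_1$ is tuned so that such a chain is unavoidable in the primal optimizer, the contradiction follows and shows not only that \eqref{EqualityPart} has no solution but also that the dual supremum in \eqref{Dual} is not attained; the subtle point is precisely the simultaneous balancing of the martingale and marginal constraints needed to force the chain structure.
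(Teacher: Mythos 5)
Your slope-increment mechanism is exactly the engine of the paper's proof: where the target regions of two consecutive atoms of $\mu_1$ meet, comparing the two tent functions that must both dominate $u_2$ and agree with it there forces $\Delta(a_{n+1})\geq\Delta(a_n)+2$ (the paper's \eqref{recursion_kn}), and the resulting divergence makes $u_2$ infinite on a set of positive $\mu_2$-measure. But the proposal has genuine gaps. First, your structural ansatz is impossible in this setting: if $\mu_1$ is purely atomic and each conditional $\Q_{a_n}$ sits on two points, then $\mu_2=\sum_n p_n\Q_{a_n}$ is purely atomic, whereas $\mu_2$ is uniform on $[0,2]$. The conditionals must be diffuse; in the paper $\Q_{a_n}$ is uniform on an interval $I_n$ of length $1/n^2$ centred at $a_n$, and the two-point tangency is replaced by equality of the tent with $-u_2$ on all of $I_n$ (the paper's \eqref{1slope}). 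This matters for the logic as well: two-point tangency alone only gives one-sided bounds on the Dini derivatives of $u_2$ at the linking point and does \emph{not} force $\Delta(a_{n+1})-\Delta(a_n)=2$; it is the equality on whole adjacent intervals that pins down $u_2'(y\pm)$ exactly and yields the recursion. Second, the step you yourself call the hardest -- constructing $\mu_1$ so that the linking chain is unavoidable in the primal optimizer -- is the entire content of the counterexample and is not carried out; and your plan to certify the primal optimizer by "matching it with a dual-admissible candidate of the same value" is circular in a proposition whose very claim is that no such candidate exists. The paper disposes of both issues at once by choosing the atoms so that $\M(\mu_1,\mu_2)$ is a \emph{singleton}: the intervals $I_n$ tile $[0,\pi^2/6]$, the call prices of $\mu_1$ and $\mu_2$ coincide at every endpoint, and Lemmas \ref{SepProblems} and \ref{ReallySimple} then identify the unique martingale coupling, which automatically exhibits the chain.

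A smaller point: from $\Delta(a_n)\to\infty$ you infer $u_1(a_n)\to-\infty$ and then that $u_2$ "blows up on $[0,2]$"; unboundedness of $u_2$ does not by itself contradict real-valuedness, so you still need either a uniform upper bound for $u_2$ on a compact set (e.g.\ $u_2\leq L_1$ there) or, as the paper does, the quantitative recursion $u_1(a_{n+1})\geq u_1(a_n)+\Delta(a_n)\cdot\tfrac12\bigl(\tfrac1{n^2}+\tfrac1{(n+1)^2}\bigr)$, whose iteration produces a harmonic series and forces $-u_2(s_2)=+\infty$ at every fixed $s_2\geq\pi^2/6$. The balance between summable interval lengths and divergent $\sum_n n\,|I_n|$ is precisely what the construction of $\mu_1$ has to deliver, and it is absent from the proposal.
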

In the proof of Proposition \ref{DualEx} we will use the following auxiliary result. 

\begin{lem}\label{SepProblems}
Assume that $\mu_1, \mu_2$ are probability measures on $\R$ having finite first moments, let $\Q\in \M(\mu_1,\mu_2)$ and fix $ s\in\R$. 
The following are equivalent.
\begin{enumerate}[(i)]
\item The call prices $\E_{\Q}[(S_1-s)^+]=\int ({s_1}-s)^+\, d\mu_1({s_1})$ and $\E_{\Q}[(S_2-s)^+]=\int ({s_2}-s)^+\, d\mu_2({s_2})$ are equal.
\item If $S_1 \leq s$, then $S_2\leq s$ and if $S_1>s$ then $S_2\geq s$, $\Q$-a.s.
\end{enumerate}
In particular, if (ii) holds for one measure in $\M(\mu_1, \mu_2)$, then it applies to all elements of $\M(\mu_1, \mu_2)$.
\end{lem}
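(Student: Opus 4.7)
My plan is to establish the equivalence as a Jensen-type ``tightness'' statement: I will produce a $\Q$-a.s.\ pointwise inequality whose expectation reduces to comparing $\E_\Q[(S_2-s)^+]$ with $\E_\Q[(S_1-s)^+]$ thanks to the martingale property, and whose pointwise equality case is precisely condition (ii).

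The key observation is that $\phi(x) := (x-s)^+$ is convex and $g(x) := \1_{\{x>s\}}$ is a subgradient (the only delicate choice is at $x=s$, where picking $g(s)=0$ will make the equality case align with (ii) on the boundary $\{S_1 = s\}$). Concretely,
\begin{equation}
\phi(y) - \phi(x) \geq g(x)(y-x) \qquad \text{for all } x,y \in \R,
\end{equation}
with equality if and only if \emph{either} $x \leq s$ and $y \leq s$, \emph{or} $x > s$ and $y \geq s$; this is a routine case check on the three regions $x<s$, $x=s$, $x>s$.

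I would then apply this pointwise to $(x,y)=(S_1,S_2)$ and integrate against $\Q$. The martingale property yields
\begin{equation}
\E_\Q\bigl[\1_{\{S_1>s\}}(S_2 - S_1)\bigr] = \E_\Q\bigl[\1_{\{S_1>s\}}\, \E_\Q[S_2-S_1 \mid S_1]\bigr] = 0,
\end{equation}
so that $\E_\Q[(S_2-s)^+] \geq \E_\Q[(S_1-s)^+]$, with equality iff the subgradient inequality is a $\Q$-a.s.\ equality. Since $\E_\Q[(S_i-s)^+]=\int (s_i-s)^+\, d\mu_i(s_i)$ depends only on the marginals, equality of the two expectations is exactly (i), and $\Q$-a.s.\ equality of the integrand is, by the characterization above, exactly (ii). This proves (i) $\Leftrightarrow$ (ii).

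The ``in particular'' clause is then immediate: (i) involves only the marginals $\mu_1,\mu_2$, so if (ii) holds for some $\Q \in \M(\mu_1,\mu_2)$ it forces (i), and (i) propagates (ii) to every $\Q' \in \M(\mu_1,\mu_2)$ via the equivalence. The main thing to watch out for is the subgradient choice at $x=s$; any other choice would give ``$S_1 = s \Rightarrow S_2 \geq s$'' instead of the desired ``$S_1 = s \Rightarrow S_2 \leq s$'', so one has to commit to $g(x)=\1_{\{x>s\}}$ from the outset.
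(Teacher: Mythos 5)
Your proof is correct and is essentially the paper's argument in compressed form: the subgradient inequality for $(x-s)^+$ with $g(x)=\1_{\{x>s\}}$ encodes exactly the paper's two inequalities on the events $\{S_1>s\}$ and $\{S_1\le s\}$ (its \eqref{SHalf} and \eqref{FHalf}), and both proofs kill the cross term $\E_\Q[\1_{\{S_1>s\}}(S_2-S_1)]$ via the martingale property and read off (ii) from the pointwise equality cases. Your attention to the choice $g(s)=0$ at the kink is indeed the one delicate point, and you handled it correctly.
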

\begin{proof}
Given a random variable $X$ and a measurable set $A$ we write $\E_{\Q}[X,A]= \E_\Q[X\1_A]$.
Then we have 
\begin{align}
\label{SHalf}\E_{\Q}[(S_2-s)^+,\ S_1> s]&\geq \E_{\Q}[S_2-s,\ S_1> s],
\\
\label{FHalf}\E_{\Q}[(S_2-s)^+,\ S_1\leq s]&\geq 0, 
\end{align}
where equality holds in \eqref{SHalf}
if and only if $S_1> s \Rightarrow S_2\geq s$ $\Q$-a.s. and in \eqref{FHalf} if and only if $S_1\leq s \Rightarrow S_2\leq s$  $\Q$-a.s. 
Using (in deriving the last line) that $S$ is a $\Q$-martingale we thus obtain
\begin{align*}
\E_{\Q}[(S_2-s)^+]&=\E_{\Q}[(S_2-s)^+,\ S_1>s]+\E_{\Q}[(S_2-s)^+,\ S_1\leq s]\\
& \geq  \E_{\Q}[S_2-s,\ S_1>s] +0 \\
& = \E_{\Q}[S_1-s,\ S_1>s]+  \E_{\Q}[(S_1-s)^+,\ S_1\leq s]=\E_\Q[(S_1-s)^+],
\end{align*}
with equality holding true if and only if (ii) is satisfied. 
 \end{proof}
We also make the following trivial observation:
\begin{lem}\label{ReallySimple}
Let $c,d, x\in \R, c<x \leq d$, let $m$ be a measure on $[c,d]$ and set $\alpha=m([c,d])$. Then the product-measure $\delta_x \otimes m$ is the unique  measure on $[c,d]^2$ which has $\alpha \delta_x$ as first marginal and $m$ as second marginal.
\end{lem}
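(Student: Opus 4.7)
The plan is to reduce the lemma to a standard disintegration argument: the requirement that the first marginal be $\alpha \delta_x$ forces any such coupling to sit on the vertical fiber $\{x\}\times[c,d]$, and once this is established the second marginal condition pins down the measure on that fiber.

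More concretely, let $\pi$ be any Borel measure on $[c,d]^2$ with first marginal $\alpha\delta_x$ and second marginal $m$. First I would note that for every Borel set $A\subseteq[c,d]$ the first-marginal condition gives
\begin{equation*}
\pi(A\times[c,d])=\alpha\delta_x(A)=\alpha\,\mathbf 1_{\{x\in A\}}.
\end{equation*}
Applying this to $A=\{x\}^c$ yields $\pi\bigl(\{x\}^c\times[c,d]\bigr)=0$, so $\pi$ is supported on the vertical line $\{x\}\times[c,d]$.

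Next I would define $\nu$ on $[c,d]$ by $\nu(B):=\pi(\{x\}\times B)$ for Borel $B\subseteq[c,d]$. Since $\pi$ is concentrated on $\{x\}\times[c,d]$, it follows immediately that $\pi=\delta_x\otimes\nu$ (either by checking on the $\pi$-system of rectangles and invoking a monotone-class argument, or by noting that both measures agree on all rectangles and are $\sigma$-finite). The second-marginal condition then reads, for every Borel $B\subseteq[c,d]$,
\begin{equation*}
m(B)=\pi([c,d]\times B)=\pi(\{x\}\times B)=\nu(B),
\end{equation*}
so $\nu=m$ and hence $\pi=\delta_x\otimes m$.

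This is entirely routine; there is essentially no obstacle beyond making the two observations above precise. The only mild subtlety is verifying that a measure concentrated on a measurable ``slice'' $\{x\}\times[c,d]$ really does factor as $\delta_x\otimes\nu$, but this is immediate from the identification of $\{x\}\times[c,d]$ with $[c,d]$ via the projection onto the second coordinate. The hypothesis $c<x\leq d$ plays no role in the argument; it merely ensures $x\in[c,d]$ so that $\delta_x$ is a legitimate measure on $[c,d]$.
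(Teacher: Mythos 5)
Your proof is correct and is exactly the routine argument one would expect; the paper in fact states this lemma as a ``trivial observation'' and omits the proof entirely. The two steps you isolate --- the first marginal forces concentration on the fiber $\{x\}\times[c,d]$, and the second marginal then identifies the fiber measure with $m$ --- together with agreement on rectangles and a monotone-class argument, are a complete and correct justification.
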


\begin{proof}[of Proposition \ref{DualEx}.]
Denote by $\lambda$ the Lebesgue measure on the real line and
set $\mu_2=\frac{1}{2} \lambda_{[0,2]}$.   
Define 
\begin{align}
\textstyle a_n&= \textstyle \tfrac 12 \Big( \sum_{i=1}^{n-1} \tfrac1{i^2}+\sum_{i=1}^{n} \tfrac1{i^2}\Big), n\geq 1\\
\bar a &= \textstyle  \tfrac 12 \Big(\sum_{i=1}^{\infty} \tfrac1{i^2}+2\Big)= \tfrac{\pi^2}{12}+1,\\
\mu_1&=\tfrac{1}{2}\sum_{i=1}^\infty \tfrac1{i^2} \delta_{a_i}+\tfrac12 
\Big(2- \tfrac{\pi^2}6\Big) \delta_{\bar a}. 
\end{align} 
We claim that $\M(\mu_1, \mu_2)$ consists of the single element
    \begin{align}
    \Q= \tfrac{1}{2}\sum_{n=1}^\infty  \delta_{a_n}\otimes \lambda_{\left|\left[\sum_{i=1}^{n-1} \frac1{i^2},\sum_{i=1}^{n} \frac1{i^2}\right]\right.}+\tfrac12 \delta_{\bar a}\otimes \lambda_{\left|\left[\frac{\pi^2}6,2\right]\right.}.
    \end{align}
Note that $a_n$ is defined to be the midpoint of the interval 
$I_n:=\left[\sum_{i=1}^{n-1} \frac1{i^2},\sum_{i=1}^{n} \frac1{i^2}\right]$, $n\in\N$; likewise $\bar a$ is the midpoint of $\bar I = \left[\frac{\pi^2}6,2\right]$.
 Therefore  $\Q$ is indeed an element of $\M(\mu_1,\mu_2)$. 
 
To prove that $\Q$ is the only element of $\M(\mu_1,\mu_2)$
we first observe that for $s\in S =\{\sum_{i=1}^n \tfrac 1{i^2}:n\geq 0\}\cup\{\tfrac{\pi^2}6,2\}$
    \begin{align}\label{TransBoxes}
    \Q\big([0, s]^2 \cup [s,2]^2\big)=1.
    \end{align}
 Lemma \ref{SepProblems} yields that \eqref{TransBoxes} applies to an arbitrary measure $\widetilde \Q\in \M(\mu_1,\mu_2)$. As $S$ is countable, it follows that 
$$ 1 =\widetilde \Q\left(\bigcap_{s\in S}\Big( [0, s]^2 \cup [s,2]^2\Big)\right).$$
We also note that  
$$
\bigcap_{s\in S}\Big( [0, s]^2 \cup [s,2]^2\Big)
= \bigcup_{n=1}^\infty I_n^2
\cup \bar I^2
 =:\Gamma.
$$
Applying Lemma \ref{ReallySimple} with $[c,d]=I_n,$ 
$ n\in\N$ resp.\ $[c,d]= \bar I$ 
it follows that $\Q$ is the only measure satisfying $\Q(\Gamma)=1$ and having marginals $\mu_1, \mu_2$. Since $\widetilde \Q(\Gamma)=1$, we conclude that $\widetilde \Q=\Q$. Thus we have  indeed $\M(\mu_1, \mu_2)=\{\Q\}$.

\begin{figure}[h]
  \caption{Support of the unique Martingale Measure}\label{fig_support_counterexample}
  \centering\vspace*{0.2cm}
    \includegraphics[scale=0.2]{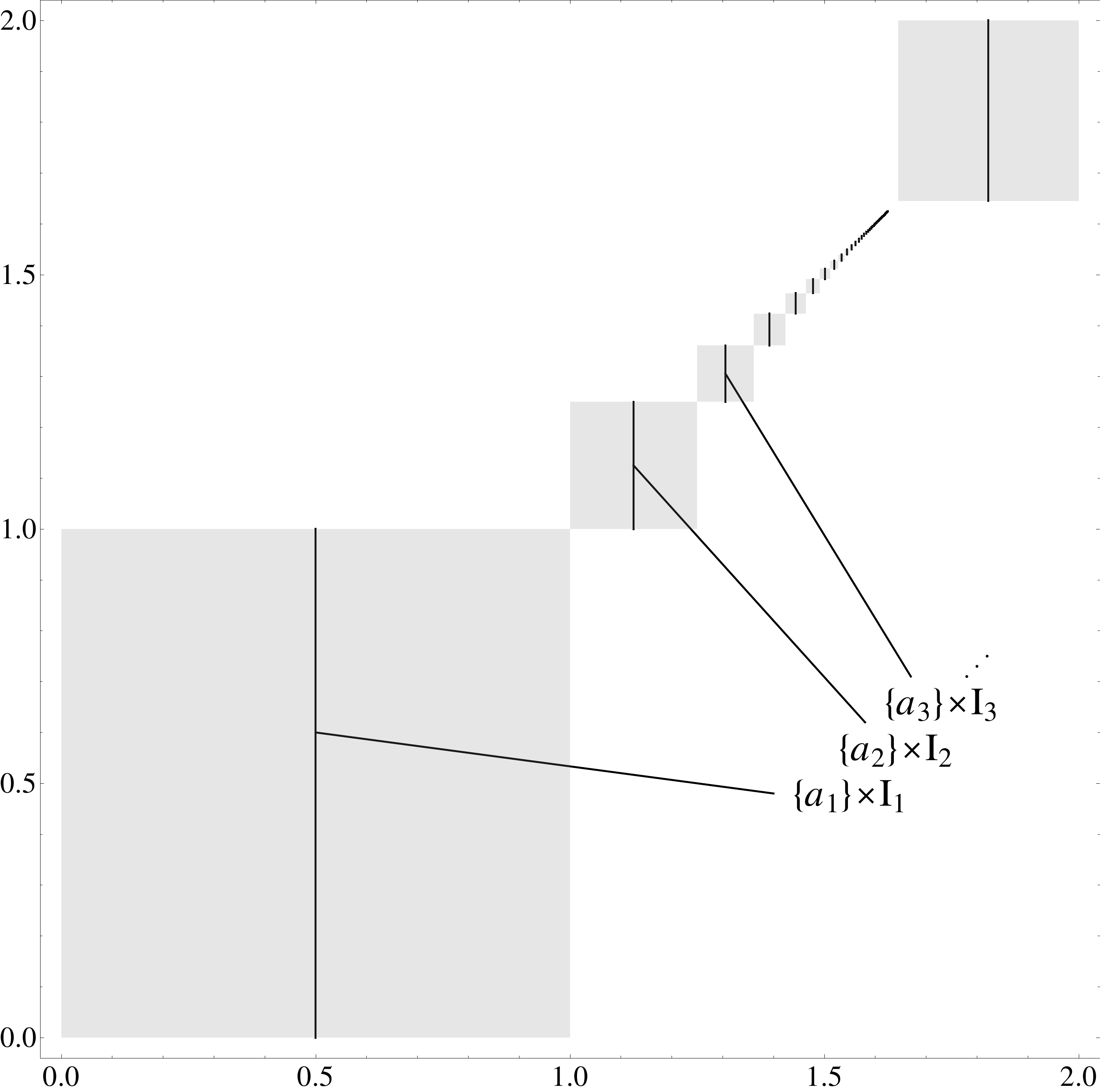}
\end{figure}


According to the short discussion preceding Proposition \ref{DualEx} it is sufficient to show that \eqref{EqualityPart} cannot be verified.
Striving for a contradiction, we assume that there exist
$u_1,u_2,\Delta\colon\R\to \R $ such that \eqref{EqualityPart}  (with respect to the  measure $\Q$) holds true.

Setting $d_n=u_1(a_n), k_n:= \Delta (a_n) $, $n\in\N$ we obtain
\begin{align}\label{ExStart} d_n+ k_n({s_2}-a_n) +|{s_2}-a_n|\leq -u_2({s_2})\end{align}
for ${s_2}\in\R$ with equality holding for $\lambda$-almost all $s_2\in  I_n$. 
Applying this with $n$ and $n+1$, respectively yields  
      \begin{align}\label{1slope}
     \begin{split} 
    \textstyle d_n+ k_n({s_2}-a_n) +|{s_2}-a_n|\leq  -u_2(s_2)=\ &d_{n+1}+ k_{n+1} ({s_2}-a_{n+1}) +|{s_2}-a_{n+1}| \\
	    &\quad \mbox{for ${s_2}\in I_{n+1}$,}  \\  
      d_n+ k_n({s_2}-a_n) +|{s_2}-a_n| =  -u_2(s_2)\geq\  & d_{n+1}+ k_{n+1} ({s_2}-a_{n+1}) +|{s_2}-a_{n+1}| \\
	   &\quad  \mbox{for ${s_2}\in I_n$}.
	    \end{split}
      \end{align}
Note that as these inequalities appeal to piecewise linear functions it is not necessary to exclude  exceptional null-sets, in particular
$$d_n+ k_n(y_0-a_n) +|y_0-a_n|= d_{n+1}+ k_{n+1} (y_0-a_{n+1}) +|y_0-a_{n+1}| $$
for $y_0=\sum_{i=1}^{n} \tfrac1{i^2}.$
It follows that the slope of ${s_2}\mapsto d_n+ k_n({s_2}-a_n) +|{s_2}-a_n|$ is smaller or equal than the one of ${s_2} \mapsto d_{n+1}+ k_{n+1} ({s_2}-a_{n+1}) +|{s_2}-a_{n+1}|$ at the point ${s_2}=y_0$,
i.e.
      \begin{align}
      k_n+1\leq k_{n+1}-1. \label{recursion_kn} 
      \end{align}
Hence $k_{n}\geq (k_1-2)+ 2 n.$ 

Applying \eqref{1slope} for ${s_2}= a_{n+1}$ we obtain
      \begin{align}
      & d_n+ k_n(a_{n+1}-a_n) +|a_{n+1}-a_n|\leq d_{n+1}, \label{recursion_dn}\\
      \Longrightarrow\quad & d_n+ k_n  \tfrac 12 (\tfrac1{n^2}+\tfrac1{(n+1)^2}) \leq d_{n+1}. \label{AlmostThere}
      \end{align}
Iterating \eqref{AlmostThere}, we arrive at 
      \begin{align*}d_{n+1}&\geq d_1 + \sum_{i=1}^n  [ (k_1-2)+ 2 i]  \tfrac 12(\tfrac1{i^2}+\tfrac1{(i+1)^2})\\
      &\geq d_1 - |k_1-2|  \sum_{i=1}^n  \tfrac 12 (\tfrac1{i^2}+\tfrac1{(i+1)^2})  + \sum_{i=1}^n i\, (\tfrac1{i^2}+\tfrac1{(i+1)^2})
      \geq  d_1 -|k_1 -2| \tfrac {\pi^2}6 + \sum_{i=1}^n \tfrac1i .
      \end{align*} 
Thus, $d_n$ and $ k_n$ tend to $\infty $ as $n$ goes to infinity. Combining this with \eqref{ExStart}, it follows that $-u_2({s_2})=\infty $ for ${s_2}\geq \tfrac{\pi^2}6$.
 \end{proof}

Arguably, the counterexample obtained in Proposition \eqref{DualEx} is rather artificial. In particular a crucial property is that the problem consists of infinitely many problems which are mutually not connected: there exist infinitely many intervals  which intersect only in boundary points such that every $\Q\in \M(\mu_1,\mu_2)$ is concentrated on the union of the squares-products of these intervals. By Lemma \ref{SepProblems}
 this is reflected in the prices of European calls by the property 
$$\E_\Q[(S_1-s)^+]=\E_\Q[(S_2-s)^+]$$
whenever the strike $s$ is the endpoint of some interval.

   Clearly it would be desirable to find conditions which guarantee that the dual supremum is attained. However we are not able to do so at the present stage.\footnote{Some progress in this direction is made in \cite[Appendix A]{BeJu12}. (Note added in revision.)}

\subsection{A c-convex approach}\label{c-convex-approach}\label{sec:further_analysis:subsec:c-convex_approach}

In the dual part of the usual transport problem  it suffices to maximize over all pairs of functions $(u_1, u_2)$ where $u_1$ is the conjugate of $u_2$ with respect to
 $\Phi$, i.e., satisfies $$u_1({s_1})= \inf_{s_2} \Phi({s_1},{s_2})-u_2({s_2}).$$   (We refer the reader to \cite[Section 2.4]{Vi03}, \cite[Chapter 5]{Vi09} for details on this topic.)

An analogous result holds true in the present martingale setup.
Its relevance stems from the fact that it simplifies the construction of hedging strategies for options depending on two future time points. Unfortunately we are not aware of a generalization to the multi-period case.

 Given a function $g:\R\to (-\infty,
\infty]$, we write $g^{**}$ for its convex envelope\footnote{I.e.\
$g^{**}:\R\to\R$ is the largest convex function smaller than or equal to 
$g$.}. For $G\colon\R^2\to \R$, let $G^{**}\colon\R^2\to \R$ be the function
satisfying $$ G^{**}({s_1},.)=\big(G({s_1},.)\big)^{**}$$ for every ${s_1}\in \R$.
(It is straight forward to prove that $G^{**}$ is Borel measurable
resp.\ lower semi-continuous whenever $G$ is.)

\begin{prop}
 Let $\Phi:\R^2\to (-\infty, \infty]$ be a lower semi-continuous function such that   $\Phi({s_1},{s_2})\geq - K(1+|{s_1}|+|{s_2}|), {s_1},{s_2}\in\R$ and assume that there is some $\m\in \M(\mu_1, \mu_2)$ satisfying $\E_{\Q}[ \Phi] <\infty$. Then
\begin{align}\label{NiceDuality}
 P= \sup_{u_2\colon\R\to \R, \int |u_2|\, d\mu_2 <\infty}
  \E_{\mu_1}[(\Phi(S_1,S_1) - u_2(S_1))^{**}] + \E_{\mu_2}[u_2(S_2)].
\end{align}
\end{prop}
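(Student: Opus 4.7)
My plan is to prove the two inequalities $P \ge \text{RHS}$ and $P \le \text{RHS}$ separately. The first follows from Jensen's inequality combined with the martingale property; the second is an immediate consequence of Theorem \ref{MainTheorem} together with the characterisation of the convex envelope as the pointwise supremum of its affine minorants.

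For $P \ge \text{RHS}$, I fix $u_2\colon\R\to\R$ with $\int|u_2|\,d\mu_2 < \infty$ and set $G(s_1,s_2) := \Phi(s_1,s_2) - u_2(s_2)$. Given $\Q \in \M(\mu_1,\mu_2)$, I would disintegrate $\Q = \mu_1(ds_1) \otimes \Q_{s_1}(ds_2)$; the martingale property forces $\Q_{s_1}$ to have barycenter $s_1$ for $\mu_1$-a.e.\ $s_1$. Since $G \ge G^{**}$ pointwise and $G^{**}(s_1,\cdot)$ is convex, Jensen's inequality yields
$$\int G(s_1,s_2)\,d\Q_{s_1}(s_2) \ge \int G^{**}(s_1,s_2)\,d\Q_{s_1}(s_2) \ge G^{**}(s_1,s_1).$$
Integrating against $\mu_1$ and adding $\E_{\mu_2}[u_2]$ on both sides gives $\E_\Q[\Phi] \ge \E_{\mu_1}[G^{**}(S_1,S_1)] + \E_{\mu_2}[u_2]$. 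Taking the infimum over $\Q \in \M(\mu_1,\mu_2)$ and then the supremum over $u_2$ yields $P \ge \text{RHS}$.

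Conversely, given $\eps > 0$, Theorem \ref{MainTheorem} supplies $u_1\in \SS$, $u_2\in \SS$ and $\Delta \in C_b(\R)$ with
$$u_1(s_1) + u_2(s_2) + \Delta(s_1)(s_2-s_1) \le \Phi(s_1,s_2), \qquad \E_{\mu_1}[u_1] + \E_{\mu_2}[u_2] \ge P - \eps.$$
This rewrites as $u_1(s_1) + \Delta(s_1)(s_2-s_1) \le G(s_1,s_2)$; the left-hand side is affine in $s_2$ and therefore dominated by $G^{**}(s_1,s_2)$, since $G^{**}(s_1,\cdot)$ is by definition the largest convex function below $G(s_1,\cdot)$ and hence majorises every affine minorant of the latter. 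Setting $s_2 = s_1$ gives $u_1(s_1) \le G^{**}(s_1,s_1)$, so
$$P - \eps \le \E_{\mu_1}[u_1] + \E_{\mu_2}[u_2] \le \E_{\mu_1}[G^{**}(S_1,S_1)] + \E_{\mu_2}[u_2] \le \text{RHS}.$$
Letting $\eps \to 0$ concludes the argument.

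The main technical obstacle I anticipate is integrability in the first step: for a generic $u_2$ the function $G^{**}(s_1,\cdot)$ may collapse to $-\infty$, rendering the right-hand side ill-defined or equal to $-\infty$. Such $u_2$ contribute $-\infty$ to the supremum and can simply be discarded; conversely, the $u_2 \in \SS$ produced in the second step automatically admits the $\mu_1$-integrable affine minorant $s_2 \mapsto u_1(s_1) + \Delta(s_1)(s_2-s_1)$ of $G(s_1,\cdot)$, so that along an approximating sequence every expectation in the argument is well defined and the supremum is effectively attained over a class where the analysis is rigorous.
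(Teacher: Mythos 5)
Your proposal is correct and follows essentially the same route as the paper: Jensen's inequality applied to the convex function $(\Phi(s_1,\cdot)-u_2(\cdot))^{**}$ under the conditional law of $S_2$ given $S_1$ (whose barycenter is $s_1$ by the martingale property) for one inequality, and the observation that any affine minorant $s_2\mapsto u_1(s_1)+\Delta(s_1)(s_2-s_1)$ of $\Phi(s_1,\cdot)-u_2(\cdot)$ lies below its convex envelope, combined with $D=P$ from Theorem \ref{MainTheorem}, for the other. The only differences are presentational (explicit disintegration versus conditional expectations, and an $\eps$-optimal dual element versus comparing suprema directly).
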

(In the course of the proof we will see that for every choice of
$u_2$ the first integral in  \eqref{NiceDuality} is well defined,
assuming possibly the value $-\infty$.)
\begin{proof}
We start to show that the primal value $P$  is greater or equal than
the right hand side of  \eqref{NiceDuality}. Let $u_2\colon\R\to \R$ be a
$\mu_2$-integrable function. For  $\m\in \M(\mu_1, \mu_2)$ satisfying
$\E_{\Q}[ \Phi] <\infty$ we have
\begin{align}
\nonumber \E_{\Q}[\Phi(S_1,S_2)] &= \E_{\Q}[\Phi(S_1,S_2) - u_2(S_2)] + \E_{\mu_2}[u_2(S_2)] \\
  \nonumber &\ge  \E_{\Q}[(\Phi(S_1,S_2) - u_2(S_2))^{**}] + \E_{\mu_2}[u_2(S_2)] \\
   &= \E_{\mu_1}[\E_{\Q}[(\Phi(S_1,S_2) - u_2(S_2))^{**}| \, S_1]] + \E_{\mu_2}[u_2(S_2)]  \label{jen1} \\
   &\ge \E_{\mu_1}[(\Phi(S_1,\E_{\Q}[S_2| \, S_1]) - u_2(\E_{\Q}[S_2| \, S_1]))^{**}] + \E_{\mu_2}[u_2(S_2)] \label{jen2} \\
\nonumber   &=\E_{\mu_1}[(\Phi(S_1,S_1) - u_2(S_1))^{**}] + \E_{\mu_2}[u_2(S_2)],
\end{align}
where the inequality between \eqref{jen1} and \eqref{jen2}
holds due to Jensen's  inequality.
This proves the first inequality.

To establish the reverse inequality, we make a simple observation.
Let   $s_1\in \R$ and $g\colon\R\to \R$ be some function. Suppose that
for $u_1\in\R$ there exists $\Delta\in \R$ such that
$$u_1+ \Delta \cdot({s_2}-{s_1}) \leq g({s_2})$$
for all $s_2\in \R$. Then $u_1\leq g^{**}({s_1}).$

Applying this for $s_1\in \R$ to the function ${s_2}\mapsto g({s_2})=
\Phi({s_1},{s_2})- u_2({s_2})$ we obtain
\begin{align}
&\sup_{u_2}\,
  \E_{\mu_1}[(\Phi(S_1,S_1) - u_2(S_1))^{**}] + \E_{\mu_2}[u_2(S_2)]
\\
 \geq\  & \sup_{u_2}    \sup_{u_1 \,:\, \exists\Delta,   u_1({s_1})+\Delta({s_1})({s_2}-{s_1}) \leq \Phi({s_1},{s_2})-u_2({s_2})}
 \E_{\mu_1}[u_1(S_1)] + \E_{\mu_2}[ u_2(S_2)] \\
 =\ &\sup_{u_1, u_2\, :\, \exists \Delta,\, \Psi_{u_1, u_2, \Delta}\leq \Phi}  \E_{\mu_1}[u_1(S_1)] + \E_{\mu_2}[ u_2(S_2)]  = D = P\, ,
 \end{align}
 where we tacitly assumed that the suprema are taken over $\mu_i$-integrable functions $u_i\colon\R\to \R, i=1,2$ and that $\Delta\colon\R\to \R$ is bounded measurable.
 \end{proof}

\section*{Summary}

This paper focusses on robust  pricing and hedging of exotic options written on one risky asset. 
\medskip

Given call prices at finitely many time points $t_1, \dotsc , t_n$ the set of martingale models calibrated to these prices leads to an interval of consistent prices of a pre-specified exotic option.  Theorem 1 resp.\ Corollary 1.1 assert inter alia that every price outside this interval gives rise to a \emph{model-independent} arbitrage opportunity. This arbitrage can be realized through a semi-static sub/super-hedging strategy consisting in dynamic trading in the underlying and a static portfolio of call  options.

\medskip

Our approach to these results is based on the duality theory of mass transport.

\section*{Acknowledgements}
We thank  the associate editor and the extraordinarily careful referees for their comments  and in particular for pointing out a mistake in an earlier version of this article. We also benefitted from remarks   by Johannes Muhle-Karbe.

\section*{Appendix}\label{sec:appendix}
As a special case of \cite[Theorem 2.14]{Ke84} we have the duality
equation
\begin{align*}
P_{MK}(\Phi) &= inf\{I_{\pi}(\Phi): \pi\in \Pi(\mu_1, \ldots, \mu_n)\} \\
	     &= \sup\Big\{ \sum_{i=1}^n \int u_i\, d\mu_i:u_1\oplus\ldots\oplus
u_n\leq \Phi, \, u_i \mbox{ is $\mu_i$-integrable}\Big\}
\end{align*}
for every lower semi-continuous cost function $\Phi:\R^n\to
[0,\infty]$. The main task in the subsequent proof of Proposition
\ref{MKDuality} is to show that the duality equation is obtained if one
restricts to functions in the class $\SS$ in the dual problem.
\begin{proof}[of Proposition \ref{MKDuality}]
As in the proof of Theorem \ref{MainTheorem}, it is sufficient to prove the duality equation in the case $\Phi\geq 0$.

\medskip

Given a bounded continuous function $f$ and $\eps >0$, then for
every $i=1,\ldots, n$ there is some $u\in \SS$ such that $f\geq u$
and $\int f-u\, d\mu_i< \eps$.  Therefore we may change the class of
admissible functions from $\SS$ to $C_b(\R)$, i.e.\ it suffices to
prove
\begin{align}\label{CBDuality}
P_{MK}(\Phi) = \sup\Big\{ \sum_{i=1}^n \int u_i\, d\mu_i:
u_1\oplus\ldots\oplus u_n\leq \Phi, \, u_i\in C_b(\R)\Big\}.
\end{align}

\medskip

We will first show this under the additional assumption that $\Phi
\in C_c(\R^n)$. By  \cite[Theorem 2.14]{Ke84} we have that for each
$\eta >0$ there exist $\mu_i$-integrable functions $u_i$,
$i=1,\ldots, n$ such that
 $$P_{MK}(\Phi)-\sum_{i=1}^n \int u_i\, d\mu_i\leq \eta$$ and $ u_1\oplus\ldots\oplus u_n\leq \Phi.$ Note that the latter inequality implies that $u_1, \ldots, u_n$ are uniformly bounded since $\Phi$ is uniformly bounded from above.

To replace $u_1 $ by a function in $C_b$ we consider $H=
\Phi-(u_1\oplus \ldots \oplus u_n)$ and define
\begin{align}\label{BetterU}
\tilde u_1 (x_1):= \inf_{x_2,\ldots, x_n\in \R} H(x_1, \ldots,
x_n)\end{align} for $x_1\in \R$. We claim that $\tilde u_1$ is
(uniformly) continuous. Indeed, as $\Phi$ is uniformly continuous,
for every $\eps>0$ there exists $\delta>0$ such that whenever $x,x'\in
\R$, $|x-x'|<\delta$, then
$$ |H(x,x_2, \ldots, x_n)-H(x',x_2, \ldots, x_n)| = |\Phi(x,x_2, \ldots, x_n)-\Phi(x',x_2, \ldots, x_n)|<\eps.$$
Thus we obtain
$$| \tilde u_1 (x)-\tilde u_1 (x')|=\Big|\inf_{x_2,\ldots, x_n\in \R} H(x,x_2, \ldots, x_n) - \inf_{x_2,\ldots, x_n\in \R} H(x',x_2, \ldots, x_n)\Big|\leq \eps$$
whenever $|x-x'|< \delta$. By definition $\tilde u_1$ is also
bounded from below and satisfies $\tilde u_1\geq u_1$ as well as
$$\tilde u_1 \oplus u_2\oplus \ldots \oplus u_n\leq \Phi.$$
Iteratively replacing the functions $u_2, \ldots, u_{n}$ in the same
fashion, we obtain \eqref{CBDuality} in the case $\Phi\in
C_c(\R^n)$.

 \medskip

 Using precisely the same argument as in the proof of Theorem \ref{MainTheorem},  we obtain the duality relation in the case of a general, lower semi-continuous function $\Phi:\R^n\to [0,\infty]$. 
 \end{proof}

\bibliographystyle{alpha}

\bibliography{joint_biblio}

\end{document}